\def\cutint{{\int \!\!\!\!\!\! -}}
\newtheorem{thm}{Theorem}
\newtheorem{prop}{Proposition}
\newtheorem{cor}{Corollary}
\newtheorem{defn}{Definition}
\begin{document}

\title{ Noncommutative  topological $\mathbb{Z}_2$ invariant}
\author{ Ralph M. Kaufmann \thanks{e--mail: rkaufman@math.purdue.edu  }
		\and
		Dan Li  \thanks{e--mail: li1863@math.purdue.edu } \\
       \and
        Birgit  Wehefritz-Kaufmann   \thanks{e--mail: ebkaufma@math.purdue.edu     }
	\\			\\			
        Keywords: noncommutative geometry, topological $\mathbb{Z}_2$ index,\\
	       fixed point algebra, KQ-cycle, NC Kane--Mele invariant
	    }

\date{}
\maketitle

\begin{abstract}
We generalize the $\mathbb{Z}_2$ invariant of topological insulators using noncommutative differential geometry in two different ways.
First, we model Majorana zero modes by KQ-cycles in the framework of analytic K-homology, and we define the noncommutative  $\mathbb{Z}_2$ invariant as a topological index
in noncommutative topology.
Second, we look at the geometric picture of the Pfaffian formalism of the $\mathbb{Z}_2$ invariant, i.e., the Kane--Mele invariant,
and we define the noncommutative Kane--Mele invariant  over the fixed point algebra of the time reversal symmetry
 in the noncommutative 2-torus. Finally, we are able to prove the equivalence between the noncommutative topological $\mathbb{Z}_2$ index and the noncommutative Kane--Mele invariant.

\end{abstract}

\section{Introduction}\label{Intro}

This work is inspired by the index theory of topological insulators \cite{KLW1501}, 
and our results will be applied to disordered topological insulators \cite{P11}. 
Because of the time reversal  $\mathbb{Z}_2$ symmetry, there exists a $\mathbb{Z}_2$-valued topological invariant 
characterizing time reversal invariant topological insulators. 
There are many equivalent characterizations of the topological $\mathbb{Z}_2$ invariant from different perspectives \cite{KLW15}.
In particular, the topological $\mathbb{Z}_2$ invariant can be understood in the framework of index theory and K-theory.
More precisely, the topological $\mathbb{Z}_2$ invariant can be interpreted as a mod 2 index theorem  \cite{KLW1501}.  
In this framework, there are basically three different ways to compute the topological $\mathbb{Z}_2$ invariant,  
the mod 2 spectral flow as the analytic index, the topological index and its exponentiated version, i.e., the  Kane--Mele invariant \cite{KM05}.
We will generalize the topological index and the Kane--Mele invariant of time reversal invariant topological insulators onto  noncommutative manifolds in this paper. 

First of all, the topological band theory of a topological insulator is defined by the Bloch bundle $\pi: \mathcal{B} \rightarrow X $, which is a finite rank Hilbert bundle 
over the momentum space $X$ \cite{KLW15, KLW1501}. For simplicity, $X$ is assumed to be a compact space without boundary.  
The topological index can be computed by the integral of the Chern character of the Bloch bundle over $X$.
Physically, the Chern character defines the effective classical field theory of a topological insulator. 
When the time reversal symmetry is taken into account, the Bloch bundle is equipped with a real structure and becomes a Quaternionic vector bundle. 
In other words, the Bloch bundle defines an element in the Quaternionic K-theory of $X$, i.e., $KQ^*(X)$ \cite{DG15}.
In a modern language, the topological index can be computed by the pairing between $KQ$-theory and $KQ$-homology \cite{KLW1501}.
In practice, it is convenient to use $KR$-theory and $KR$-homology, since there exists a canonical isomorphism between $KQ$-theory and $KR$-theory.

When the base manifold  is replaced by a noncommutative manifold defined by a $C^*$-algebra such as the noncommutative 2-torus, 
the topological index can be defined with the help of the Connes--Chern characters of K-homology and K-theory. 
By the machinery of noncommutative topology, the topological index can be computed
by pairing periodic cyclic cohomology and homology after applying the Connes--Chern characters. 
Furthermore, the local index formula of Connes and Moscovici gives another way to compute the topological index based on residue traces and spectral zeta functions \cite{CM95}.

The Kane--Mele invariant is the exponentiated version of the topological index \cite{KLW1501}, 
which gives the effective quantum field theory of a topological insulator.
In the classical case, the topological index and the Kane--Mele invariant are equivalent \cite{FM13, KLW1501}.
The Kane--Mele invariant is defined by the product of signs of Pfaffians over the fixed points \cite{FKM07, KM05}.
If we interpret the theory of a topological insulator as a topological quantum field theory (TQFT), then the Kane--Mele
invariant defines the partition function  of a fully extended TQFT \cite{KLW16}. Moreover, this extended TQFT is
completely determined by the fixed points. In the bulk-boundary correspondence of a topological 
insulator, the Kane--Mele invariant suggests that the boundary theory is given by the fixed points, which can be
reformulated by a family index theorem.

The geometric picture of the Kane--Mele invariant is more interesting for us, 
which can be interpreted as the comparison of orientations between determinant and Pfaffian line bundles. 
Based on this geometric picture, we will generalize the Kane--Mele invariant to the noncommutative 2-torus, where
the set of fixed points is replaced by the fixed point $C^*$-algebra.  The Pfaffian line bundle over the fixed point algebra
will be constructed similarly as the determinant line bundle over the noncommutative 2-torus. Finally, 
the noncommutative Kane--Mele invariant will be defined by the product of local orientations of the Pfaffian line bundle over the fixed point algebra.

This paper is organized as follows. In section 2, we will first review the index theory of Majorana zero modes and interpret the
$\mathbb{Z}_2$ invariant as a mod 2 topological index. Next we will apply the machinery of noncommutative topology to
define the noncommutative $\mathbb{Z}_2$ invariant as a topological index.  In section 3, we will first review the Kane--Mele invariant and reformulate
it with the help of determinant and Pfaffian line bundles. 
Next we will define the noncommutative Kane--Mele invariant over the fixed point algebra based on Quillen's construction
of the determinant line bundle over Fredholm operators. Finally, we will establish the equivalence between those two noncommutative versions of
the topological $\mathbb{Z}_2$ invariant.

\section{Topological index} \label{TopInd}
Let us first review the topological $\mathbb{Z}_2$ invariant as a mod 2 index theorem,
for more background and details see \cite{KLW15, KLW1501}. More precisely, the topological $\mathbb{Z}_2$ invariant counts the parity of localized Majorana zero modes, i.e.,
a mod 2 spectral flow, which can be computed by a mod 2 topological index. This interpretation of the topological $\mathbb{Z}_2$ invariant
can be directly generalized   as a topological index in noncommutative topology.

\subsection{Mod 2 index theorem}
In a time reversal invariant topological insulator, the fundamental objects to study are Majorana zero modes, and
their collective effect gives rise to the topological $\mathbb{Z}_2$ invariant. We have to point out that Majorana zero modes are different
from Majorana fermions, since Majorana zero modes are quasi-particles rather than real particles. In the framework of index theory and K-theory,
we use a vector bundle to model a topological insulator, and a K-cycle to model Majorana zero modes. 
On the one hand, we  model the band structure of a topological insulator by
a vector bundle, then we consider all such vector bundles and compute  K-theory. 
On the other hand, we model Majorana zero modes by a K-cycle in K-homology. And then we can pair K-homology with K-theory to get an index number.  

\subsubsection{KQ-theory}
Let us first review the K-theoretic approach to obtain global features of the band structure of a topological insulator.
First let $X$ be a compact space without boundary representing the momentum space of a topological insulator.
Time reversal symmetry defines an involution on $X$, i.e., a homeomorphism $\tau: X \rightarrow X$
such that $\tau^2 = id_X$, so $(X, \tau)$ is an involutive space or a Real space with the real structure $\tau$.
With the time reversal transformation $\tau$, $X$ has the structure of a $\mathbb{Z}_2$-CW complex.  
Let $X^\tau$ denote the fixed points of $\tau$, 
\begin{equation*} 
   X^\tau := \{ {x} \in X ~|~ \tau( {x} ) =  {x} \}
\end{equation*}
which is assumed to be a finite set. Then $X$ can be built up from the fixed points $X^\tau$ by
gluing $k$-cells ($1 \leq k \leq \dim X$) that carry a free $\mathbb{Z}_2$ action, i.e., $\mathbb{Z}_2$-equivariant $k$-cells.

The band structure of a topological insulator defines a complex vector bundle over the momentum space $\pi: \mathcal{B} \rightarrow X$, 
which is called the Bloch bundle by physicists.
The Bloch bundle models the finite dimensional Hilbert space of physical states, so it is a Hilbert bundle, i.e., each fiber is a Hilbert space. 
Time reversal symmetry defines a time reversal operator $\Theta$, which is an anti-unitary operator acting on electronic states.
Then it induces an anti-linear bundle isomorphism $\Theta: \mathcal{B} \rightarrow \mathcal{B}$ such that $\Theta^2 = - id_\mathcal{B}$.  
In other words, $\Theta$ defines a general real structure on $\mathcal{B}$, and it is sometimes called an anti-involution.
Taking  the real structures $\tau$ and $\Theta$ into account, one has the Hilbert bundle 
$\pi: (\mathcal{B}, \Theta) \rightarrow (X, \tau)$ with $\tau^2 = 1$ and $\Theta^2 = -1$, which is called a Quaternionic vector bundle. 
 
The Quaternionic K-group $KQ(X, \tau)$ 
is the Grothendieck group of finite rank Quaternionic vector bundles over  $(X,\tau)$.
When the involution $\tau$ is understood, one simply uses the notation $KQ(X) = KQ(X, \tau) $. 
Similar to the complex K-theory,
higher $KQ$-groups  can be defined by suspensions,
 $KQ$-theory can be extended to locally compact spaces,  and 
the reduced  $KQ$-group  $\widetilde{KQ}(X)$ is defined  by  the kernel of the restriction map
$ i^*: KQ(X) \rightarrow KQ(pt)$
induced from the inclusion $ i: {pt} \hookrightarrow X $.
There exists a canonical isomorphism 
$$
KQ^*(X) \cong KR^{*-4}(X)
$$ 
so it is convenient to compute $KQ$-groups by  $KR$-groups \cite{A66}. 
Furthermore, the Bott periodicity of KQ-theory can be derived from that of KR-theory, i.e., $KQ^{8 -i}(X) \cong KQ^{-i}(X)$.

Therefore, all possible band structures of a topological insulator can be classified by computing KQ-theory. 
For example, for a 2d momentum space $X$, the topological $\mathbb{Z}_2$ invariant belongs to $KQ(X)$. 
While for a 3d $X$, the topological $\mathbb{Z}_2$ invariant belongs to the odd KQ-group $KQ^{-1}(X)$. Although the 
topological $\mathbb{Z}_2$ invariant could also belong to $KQ(X)$ in 3 dimensions,  index theory tells us that 
the odd topological index must live in an odd KQ-group, and that is $KQ^{-1}(X)$ for a 3d $X$. 

\subsubsection{Analytic index}
If one looks into the local geometry  behind the global KQ-theory, one will see a mod 2 spectral flow of localized Majorana zero modes.
 Next let us recall the definition of 
Majorana zero modes and the analytic index.

One considers a free electronic system in a topological insulator,  let  $H$ be a Hamiltonian acting on electronic states that respects the time reversal symmetry. 
By perturbation theory, $H$ takes different forms around a fixed point and a regular point. So
the Hamiltonian $H$ is usually assumed to be parametrized by the momentum space $X$, that is, $H$ is given by a family of Hamiltonians $H(x)$ for $x \in X$. 
$H$ being time reversal invariant means $H(x)$ satisfies the following equation with respect to the time reversal transformation $\tau$ and time reversal operator $\Theta$, 
$$
\Theta H(x) \Theta^* = H(\tau(x)), \quad \forall\,\, x \in X
$$
When $H(x)$ is applied to an electronic state $\psi(x)$, one has the eigenvalue equation,
$$
H(x)\psi(x) = E(x) \psi(x)
$$
where $E(x)$ is the energy function.  At the same time, one also has a similar equation for $\Theta \psi$,
$$
 \Theta H(x) \Theta^* (\Theta \psi (x)) = \Theta [E(x) \psi(x) ] = E(\tau(x)) \Theta \psi(x)
$$
which is equivalent to 
$$
H(\tau(x))\psi(\tau(x)) = E(\tau(x)) \psi(\tau(x))
$$
As a consequence, the pair $(\psi, \Theta \psi)$ has the same energy level. We have to point out that $\psi$ and $\Theta\psi$ in general have different domains $U $ and $\tau(U)$,
if $\psi$ is a local state defined over an open subset $U\subset X$. 

Now we define the effective Hamiltonian $\tilde{H}$  by
$$
\tilde{H} := \begin{pmatrix}
             0 & \Theta H(x) \Theta^* \\
             H(x) & 0
            \end{pmatrix} = 
            \begin{pmatrix}
             0 &  H(\tau(x)) \\
             H(x) & 0
            \end{pmatrix} 
$$
acting on a pair  $\Psi = (\psi, \Theta \psi)$ so that 
$$
\tilde{H} \Psi = \begin{pmatrix}
             0 & \Theta H(x) \Theta^* \\
             H(x) & 0
            \end{pmatrix} 
            \begin{pmatrix}
             \psi \\
             \Theta \psi
            \end{pmatrix} =  \begin{pmatrix}
            \Theta E \Theta\psi \\
             E\psi
            \end{pmatrix}
$$
where $\Theta E(x) = E(\tau(x))$. Such a pair  $\Psi = (\psi, \Theta \psi)$, which consists of
an electronic state and its mirror partner under the time reversal symmetry, is called a Majorana state since
$\Psi$ satisfies a real condition, see later.

\begin{defn}
   Majorana zero modes are Majorana states $\Psi_0 = (\psi_0, \Theta \psi_0)$ that are zero modes of the effective Hamiltonian $\tilde{H}$, i.e., $\tilde {H} \Psi_0 = 0$.
\end{defn}
Because of the definition of $\tau$ and $\Theta$, the local states $\psi$ and $\Theta \psi$ have overlap domains only around some fixed point, so 
a localized Majorana zero mode is a pair $(\psi_0, \Theta \psi_0)$ such that $\psi_0(x) = 0$ and $\Theta \psi_0(x) = 0$ at some $x \in X^\tau$.
For example, in 3d a localized Majorana zero mode has the local geometry of a double cone, 
that is, in a small neighborhood of a fixed point $o \in X^\tau$, it can be described by the equation
$\{ (x, y,z) ~|~x^2 + y^2 = z^2\}$ and $o = (0,0,0)$.

\begin{defn}
    The analytic index of the effective Hamiltonian $\tilde{H}$ is defined as the parity of Majorana zero modes,
\begin{equation}
   ind_a= ind_2(\tilde{H}) = \dim \ker \tilde{H} \quad \text{mod 2}
\end{equation}
\end{defn}
It is $\mathbb{Z}_2$-valued since $KO^{-1}(pt) = \mathbb{Z}_2$. More precisely,
$\tilde{H}$ can be viewed as a skew-adjoint Fredholm operator  \cite{KLW1501}, and a classifying space of $KR^{-1}$
is given by skew-adjoint Fredholm operators \cite{AS69}.

The $\mathbb{Z}_2$ invariant of a topological insulator, denoted by $\nu$, happens to be also defined as the parity of Majorana zero modes, i.e., the analytic index,
$$
\nu = ind_a 
$$
 In a localized Majorana zero mode $(\psi_0, \Theta \psi_0)$, if we 
call either $\psi_0$ or $\Theta \psi_0$ a chiral zero mode, then the spectral flow of  such a chiral zero mode can be used to compute the analytic index.
As a result, the analytic index of $\tilde{H}$ can be computed by the mod 2 spectral flow of $H$ or $\Theta H \Theta^*$, namely,
the parity of chiral zero modes running through all fixed points (at most one chiral zero mode for each fixed point). 

\subsubsection{Topological index}
By the Atiyah--Singer index theorem, the analytic index can be computed by the topological index.
In our case, the topological index is the integral of the Chern character of the Bloch bundle $\mathcal{B}$ over the momentum space $X$.
In addition, the topological index being $\mathbb{Z}_2$ valued is an implication from the time reversal symmetry, which can be
proved both locally and globally. 

For example, when $X = \mathbb{T}^2$, the topological $\mathbb{Z}_2$ invariant comes from $\widetilde{KQ}(\mathbb{T}^2) = \mathbb{Z}_2$.
In this case, the topological $\mathbb{Z}_2$ invariant can be computed by the  topological index, which is the Chern character of the projection $p$,
\begin{equation}
  ind(p) = \frac{1}{2\pi}\int_{\mathbb{T}^2} tr(pdpdp)
\end{equation}
where $p$ is the representative of the generator of $\widetilde{KQ}(\mathbb{T}^2) \cong \mathbb{Z}_2$. 
More precisely, the projection $p$ represents the Bloch bundle $\pi: (\mathcal{B}, \Theta) \rightarrow (X, \tau)$.
This index $ind(p)$ is also called the first Chern number, which is naturally $\mathbb{Z}_2$-valued since the first Chern class can be proved
to be a 2-torsion \cite{KLW1501}.

When $X = \mathbb{T}^3$, the topological $\mathbb{Z}_2$ invariant as an index comes from $\mathbb{Z}_2 \in \widetilde{KQ}^{-1}(\mathbb{T}^3) $.
In this case, the $\mathbb{Z}_2$ invariant can be computed by the odd topological index, which is the odd Chern character of the unitary $g$, 
\begin{equation}
   ind(g)  = \frac{1}{4\pi^2 }   \int_{\mathbb{T}^3} tr(g^{-1}dg)^{3}
\end{equation}
where $g$ is the representative of the generator of $\mathbb{Z}_2 \in \widetilde{KQ}^{-1}(\mathbb{T}^3) $.
More precisely, $ g: (\mathbb{T}^3, \tau) \rightarrow (U(2), \sigma) $ is the transition function of 
the Quaternionic Hilbert bundle $\pi: (\mathcal{B}, \Theta) \rightarrow (X, \tau)$,
where $\sigma$ is an involution defined on the structure group $U(2)$ by $\sigma(g) \mapsto -g^{-1}$ such that $\sigma^2 = 1$. 
The index $ind(g)$ being $\mathbb{Z}_2$-valued can be proved locally based on the compatibility condition between the involutions 
$\tau$ and $\sigma$ \cite{KLW1501}.

\subsubsection{KQ-homology}
Now we introduce the KQ-cycle of Majorana zero modes, and then the topological index can be computed by pairing KQ-homology  with KQ-theory.
First, we define a new real structure based on the  time reversal operator $\Theta$,
$$
J = \begin{pmatrix}
    0 & \Theta^* \\
    \Theta & 0
    \end{pmatrix}
$$
such that $J^* = J$ and $J^2 = 1$. Since the real structure $J$ acts on a pair $(\psi, \Theta \psi)$ by
$$
\begin{pmatrix}
 0 & \Theta^* \\
 \Theta & 0
\end{pmatrix} \begin{pmatrix} 
                \psi \\
                \Theta \psi
              \end{pmatrix} = \begin{pmatrix}
                                \psi \\
                                \Theta \psi
                              \end{pmatrix}
$$
 $\Psi = (\psi, \Theta \psi)$ satisfies the real condition $J \Psi = \Psi$ and  $\Psi$ is called a Majorana state.
The real structure $J$ can be augmented by tensoring with the identity matrix $J \times I_n$, still call it $J$, so that 
the Hilbert bundle $\pi: (\mathcal{B}, J) \rightarrow (X, \tau)$ becomes a Real vector bundle with $\tau^2= 1$ and $J^2 = 1$.
We stress that each fiber of $\mathcal{B}$ can be viewed as a vector space over quaternions $\mathbb{H}$, whose standard basis is given by $\{ i, \Theta, i\Theta \}$.

We define a generalized $KR_1$-cycle by the quadruple 
\begin{equation}\label{genKR}
   (C(X), L_{\mathbb{H}}^2(X, \mathcal{B}), \tilde{H}, J)
\end{equation}
such that 
$$
J \tilde{H} = - \tilde{H}J, \quad J^2 = 1
$$
where the Hilbert space $L^2_{\mathbb{H}}$ is defined over quaternions $\mathbb{H}$.
We want to emphasize the time reversal operator $\Theta$, so the above quadruple is equivalently written as,  
$$
(C(X), L^2(X, \mathcal{C}) \oplus  L^2(X, \Theta \mathcal{C}) , \begin{pmatrix}
                                                                                        0 & \Theta H \Theta^* \\
                                                                                        H& 0 
                                                                                     \end{pmatrix}
, \begin{pmatrix}
    0 & \Theta^* \\
    \Theta & 0
  \end{pmatrix}
)
$$
where the Bloch bundle is decomposed into $ \mathcal{B} \cong \mathcal{C} \oplus \Theta \mathcal{C}$ separating the chiral states.
If we add the usual grading operator
$\gamma = \begin{pmatrix}
          1 & 0 \\
          0 & -1 
         \end{pmatrix}$
with $\gamma = \gamma^*$ and $\gamma^2 = 1$, then $\gamma$ is used to separate the chiral states,
$$
 \frac{1 + \gamma}{2} \Psi = \begin{pmatrix}
                               1 & 0 \\
                               0 & 0 
                             \end{pmatrix} \begin{pmatrix}
                                             \psi \\
                                             \Theta \psi
                                            \end{pmatrix} = \begin{pmatrix}
                                                               \psi \\
                                                                0
                                                             \end{pmatrix}, \quad 
 \frac{1 - \gamma}{2} \Psi = \begin{pmatrix}
                               0 & 0 \\
                               0 & 1 
                             \end{pmatrix} \begin{pmatrix}
                                             \psi \\
                                             \Theta \psi
                                            \end{pmatrix} = \begin{pmatrix}
                                                               0 \\
                                                                \Theta \psi
                                                             \end{pmatrix}                                                            
$$
\begin{defn}
 The KQ-cycle of Majorana zero modes is defined as the quintuple, 
\begin{equation}\label{KQ6cyc}
   (C(X), L^2(X, \mathcal{C}) \oplus  L^2(X, \Theta \mathcal{C}) , \begin{pmatrix}
                                                                                        0 & \Theta H \Theta^* \\
                                                                                        H& 0 
                                                                                     \end{pmatrix}
, \begin{pmatrix}
    0 & \Theta^* \\
    \Theta & 0
  \end{pmatrix},   \begin{pmatrix}
          1 & 0 \\
          0 & -1 
         \end{pmatrix}
)
\end{equation}
such that 
$$
J \tilde{H} = - \tilde{H}J, \quad J^2 = 1, \quad J\gamma  = - \gamma J
$$
\end{defn}
The above quintuple \eqref{KQ6cyc} is an even $KQ$-cycle, which is basically the same as the generalized $KR_1$-cycle \eqref{genKR},
so \eqref{KQ6cyc} is viewed as a $KR_2$-cycle, i.e., a $KQ_6$-cycle. The equivalence class of a KQ-cycle defines an element in the KQ-homology.
 In real computations, it is more convenient to use KR-cycles and KR-homology rather than KQ-cycles and KQ-homology. 

As a remark, if we define an operator $\hat{H}$ by rewriting $\tilde{H}$,
$$
\hat{H} : = \begin{pmatrix}
          H & 0 \\
          0 & \Theta H \Theta^* 
      \end{pmatrix}
$$
 then the following quintuple defines a $KR_6$-cycle, call it the twisted KQ-cycle, 
\begin{equation}\label{KQ2cyc}
   (C(X), L^2(X, \mathcal{C}) \oplus  L^2(X, \Theta \mathcal{C}) , \begin{pmatrix}
                                                                                        H & 0 \\
                                                                                        0 & \Theta H \Theta^*  
                                                                                     \end{pmatrix}
, \begin{pmatrix}
    0 & \Theta^* \\
    \Theta & 0
  \end{pmatrix},   \begin{pmatrix}
          1 & 0 \\
          0 & -1 
         \end{pmatrix}
)
\end{equation}
such that 
$$
J\hat{H} = \hat{H}J, \quad J^2 = 1, \quad J\gamma  = - \gamma J
$$

\subsubsection{Pairing KQ-homology  with KQ-theory }
\begin{prop}
 For a two dimensional momentum space $X$, the topological index is obtained by the index pairing
 between the $KQ$-cycle \eqref{KQ6cyc} with the Bloch bundle $\pi: (\mathcal{B}, J) \rightarrow (X, \tau)$.
\end{prop}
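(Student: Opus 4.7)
The plan is to realize the asserted pairing as a Connes--Chern pairing in K-homology/K-theory and then to match it, term by term, with the integrated Chern character formula $\mathrm{ind}(p) = \frac{1}{2\pi}\int_{\mathbb{T}^2} \mathrm{tr}(p\, dp\, dp)$ appearing earlier. First I would replace the unbounded self-adjoint Fredholm operator $\tilde H$ by its phase $F = \tilde H\, |\tilde H|^{-1}$, extended by zero on $\ker\tilde H$, which is finite dimensional. This is a bounded operator homotopy and preserves the KQ-homology class, so the quintuple $(C(X), L^2(X,\mathcal C)\oplus L^2(X, \Theta \mathcal C), F, J, \gamma)$ is a bounded even Fredholm module with real structure representing the same KQ-class as \eqref{KQ6cyc}. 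The Bloch bundle, being Quaternionic over $(\mathbb{T}^2, \tau)$, is represented by a self-adjoint projection $p \in M_n(C(\mathbb{T}^2))$ with $JpJ = p$, so $[p] \in \widetilde{KQ}(\mathbb{T}^2)$.

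The index pairing between $[F]$ and $[p]$ is then
\begin{equation*}
\langle [F], [p]\rangle = \mathrm{Index}(p^+ F p^-) \bmod 2,
\end{equation*}
with $p^{\pm} = \tfrac{1\pm\gamma}{2}\, p$ the chiral components separating an electronic state from its time-reversed partner. Applying the Connes--Chern character transports this to a pairing in periodic cyclic (co)homology. The character of $F$ on $C^\infty(\mathbb{T}^2)$ is represented, up to normalization, by the cyclic 2-cocycle
\begin{equation*}
\varphi(a_0, a_1, a_2) = \frac{1}{2\pi}\int_{\mathbb{T}^2} a_0 \, da_1 \wedge da_2,
\end{equation*}
and its pairing with the cyclic chain $\mathrm{tr}(p\otimes p\otimes p)$ representing the Chern character of the projection reproduces exactly $\frac{1}{2\pi}\int_{\mathbb{T}^2} \mathrm{tr}(p\, dp\, dp)$, giving the desired identification with $\mathrm{ind}(p)$.

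Finally, I would verify the $\mathbb{Z}_2$-reduction. A priori $\mathrm{Index}(p^+ F p^-) \in \mathbb{Z}$, but the combined constraints $J\tilde H = -\tilde H J$, $J\gamma = -\gamma J$, $J^2 = 1$ and $JpJ = p$ give $J$ an intertwining between kernel and cokernel that doubles Quaternionic multiplicities; equivalently, the first Chern class of a Quaternionic bundle is $2$-torsion, so the integer on the right-hand side is well-defined only modulo $2$. This matches $\widetilde{KQ}(\mathbb{T}^2)\cong \mathbb{Z}_2$ and the identification $\nu = \mathrm{ind}(p) \bmod 2$ claimed earlier.

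The main obstacle will be tracking the real structure $J$ carefully through the Connes--Chern character and verifying that the resulting cocycle genuinely pairs $KQ$-homology with $KQ$-theory rather than just ordinary complex K-theory. Concretely, one must show that $\varphi$ is $\tau$-equivariant under pullback and that the anti-commutations of $J$ with $\tilde H$ and $\gamma$ translate, under the character, into the $2$-torsion constraint at the level of cyclic (co)homology, so that the $\mathbb{Z}_2$-valued analytic pairing coincides with the $\mathbb{Z}_2$-valued integrated Chern character rather than an ordinary integer Chern number.
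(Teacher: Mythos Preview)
Your approach is correct in spirit but takes a genuinely different route from the paper. The paper's proof is a pure degree-count in $KR$-theory: it observes that since $(\mathcal{B},J)$ is quaternionic rather than complex, the projection $p$ defines a class in $KR^{-2}(X)=KQ^{-6}(X)$; the $KQ$-cycle \eqref{KQ6cyc} lies in $KR_2(X)=KQ_6(X)$; hence the index pairing lands in $KO^{-2}(pt)=\mathbb{Z}_2$. No Connes--Chern character, no cyclic cocycle, no explicit integral --- just matching the $KR$-degrees so that the target of the pairing is forced to be $\mathbb{Z}_2$.

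You instead pass through the Connes--Chern character, identify the cyclic $2$-cocycle with the orientation cocycle $\varphi(a_0,a_1,a_2)=\frac{1}{2\pi}\int_{\mathbb{T}^2}a_0\,da_1\wedge da_2$, and recover the explicit Chern-number formula $\frac{1}{2\pi}\int_{\mathbb{T}^2}\mathrm{tr}(p\,dp\,dp)$. This buys you an actual computation connecting the abstract pairing to the concrete integral (which the paper's proof does not do), at the cost of extra machinery. One point to flag: your identification of $Ch^*(F)$ with the orientation cocycle $\varphi$ presupposes that the phase of $\tilde H$ represents the fundamental $K$-homology class of $\mathbb{T}^2$; for a generic time-reversal-invariant Hamiltonian this is an additional hypothesis, not an automatic consequence of the $KQ$-cycle axioms. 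The paper sidesteps this entirely by never leaving the $KR$-bookkeeping, while you would need to either assume it or argue it from the structure of $\tilde H$. Your closing paragraph correctly anticipates the real-structure compatibility issue, but this fundamental-class identification is the more immediate gap to close in your argument.
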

\begin{proof}
   Let $p$ be the projection representing the Real Hilbert bundle $\pi: (\mathcal{B}, J) \rightarrow (X, \tau)$ over quaternions $\mathbb{H}$.
   Since this is a quaternionic  bundle instead of a complex bundle, 
   its class $[p] \in KR^{-2}(X) = KQ^{-6}(X)$ instead of $KR^{-1}(X)$.
   
   Denote the class of  the $KQ$-cycle \eqref{KQ6cyc} by
   $$
   [(X, \tilde{H}, \gamma)] = [(C(X), L^2(X, \mathcal{B}), \tilde{H}, J, \gamma)] \in KR_2(X) = KQ_6(X) 
   $$
   Therefore, the topological index is given by the pairing between $KR_2(X)$ and $KR^{-2}(X)$,
   $$
   ind(p) = \langle [(X, \tilde{H}, \gamma)], [p] \rangle \in KO^{-2}(pt) = \mathbb{Z}_2
   $$
\end{proof}

\begin{prop}
  For a three dimensional momentum space $X$, the odd topological index is obtained by the index pairing between
  the $KQ$-cycle \eqref{KQ6cyc} with 
  the Bloch bundle $\pi: (\mathcal{B}, J) \rightarrow (X, \tau)$.
\end{prop}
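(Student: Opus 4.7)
The plan is to mirror the argument of the two-dimensional proposition, with the class of the Bloch bundle now represented by an odd K-class (a unitary) rather than an even one (a projection). First, I would recall that for a three-dimensional base the Quaternionic Hilbert bundle $\pi: (\mathcal{B}, J) \to (X, \tau)$ is classified by its transition function $g: (X, \tau) \to (U(2), \sigma)$, where $\sigma(g) = -g^{-1}$. The compatibility between $\tau$ and $\sigma$ ensures that $[g]$ defines a class in the appropriate odd KR-cohomology group, identified with $KQ^{-1}(X) \cong KR^{-5}(X)$ so as to match the 3D topological $\mathbb{Z}_2$ invariant, as already discussed in the preceding subsection.

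Second, the $KQ$-cycle \eqref{KQ6cyc} of Majorana zero modes is unchanged from the 2D setting and still represents a class $[(X, \tilde{H}, \gamma)] \in KR_2(X) = KQ_6(X)$. The index pairing between KR-homology and KR-cohomology then yields a $\mathbb{Z}_2$-valued invariant,
\[
 ind(g) = \langle [(X, \tilde{H}, \gamma)], [g] \rangle \in KO^{-1}(pt) = \mathbb{Z}_2,
\]
with the target degree determined by Bott periodicity of KR-theory.

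To identify this abstract pairing with the explicit formula $\frac{1}{4\pi^2}\int_{\mathbb{T}^3} tr(g^{-1}dg)^3$ from the previous subsection, I would apply the Connes--Chern character on both sides. On the K-theory side, $[g]$ maps to the odd Chern character $ch(g)$ in periodic cyclic cohomology, represented in the smooth model by the $3$-form $\frac{1}{4\pi^2}tr(g^{-1}dg)^3$. On the K-homology side, the $KQ$-cycle maps to the Real de Rham fundamental class of $X$ in periodic cyclic homology of $C(X)$. The resulting pairing in periodic cyclic theory is then exactly the integral above, which recovers $ind(g)$.

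The hard part will be establishing that the pairing actually lands in $\mathbb{Z}_2$ rather than $\mathbb{Z}$. A priori the Chern--Simons integral is $\mathbb{Z}$-valued, so one must argue that time reversal symmetry forces its value to be even. I would do this by decomposing $X$ along a fundamental domain for $\tau$ and exploiting the compatibility $g \circ \tau = \sigma \circ g$ to show that the contributions of the two $\tau$-halves of $X \setminus X^\tau$ cancel modulo $2$, so that only the localized contributions at the fixed points $X^\tau$ survive. This is precisely the local compatibility argument from \cite{KLW1501} already used to justify that $ind(g)$ is $\mathbb{Z}_2$-valued, which I would invoke rather than repeat.
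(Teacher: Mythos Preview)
Your proposal diverges from the paper's argument in two linked places, and the divergence matters for the degree bookkeeping.

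First, the paper does \emph{not} place $[g]$ in $KQ^{-1}(X)\cong KR^{-5}(X)$. Once the real structure $J$ with $J^2=1$ is in play, the Bloch bundle is quaternionic and its transition function lands in the compact symplectic group $Sp(n)$, so the paper takes $[g]\in KQ^{-2}(X)=KR^{-6}(X)$, explicitly noting ``instead of $KQ^{-1}(X)$''. Second, the paper does not pair $[g]$ directly against the original cycle $\tilde H$ in $KR_2(X)$. For odd-dimensional $X$ it passes to the diagonal operator $\hat H$ and views the resulting twisted $KQ$-cycle \eqref{KQ2cyc} as a $KR_6$-cycle, so that the pairing is the matched one $KR_6(X)\times KR^{-6}(X)\to KO^{-6}(pt)$, and only then ``twists back'' to $\tilde H$ to land in $KO^{-2}(pt)=\mathbb{Z}_2$.

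Your route keeps the off-diagonal $\tilde H$ in $KR_2(X)$ and pairs it with a class you place in $KR^{-5}(X)$, asserting the result sits in $KO^{-1}(pt)$. That is not the pairing pattern the paper uses (it always pairs $KR_n$ with $KR^{-n}$), and you have not justified why a $KR_2$--$KR^{-5}$ pairing should exist or land where you say. The subsequent Connes--Chern and fixed-point arguments you sketch are reasonable heuristics for identifying the pairing with the Chern--Simons integral and for its $\mathbb{Z}_2$-valuedness, but the paper does neither of these in this proof; it simply records the $KR$-degrees on both sides and reads off the target $KO^{-2}(pt)$. So the core of your argument would need to be reworked: either shift $[g]$ to $KR^{-6}$ via the symplectic structure and switch to the diagonal cycle $\hat H$, or supply an independent justification for the mismatched-degree pairing you propose.
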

\begin{proof}
 Let  $g: (X, \tau) \rightarrow (Sp(n), \sigma)$ be the unitary representing the transition function of 
  the Bloch bundle $\pi: (\mathcal{B}, J) \rightarrow (X, \tau)$. Since the image of $g$ is in the compact symplectic group $Sp(n)$,
  its class $[g] \in KQ^{-2}(X) = KR^{-6}(X)$ instead of $KQ^{-1}(X)$.
  
  For odd dimensional spaces, as a convention a Dirac operator is always written diagonally in spin geometry, 
  if so, we use the KQ-cycle \eqref{KQ2cyc} instead of 
  the original KQ-cycle \eqref{KQ6cyc}. And its class is viewed as a $KR_6$-cycle,
  $$
  [(X, \hat{H}, \gamma)] = [C(X), L^2(X, \mathcal{B}), \hat{H}, J, \gamma) ] \in KR_6(X) = KQ_2(X)
  $$
  So the pairing between the twisted KQ-cycle \eqref{KQ2cyc} and the transition function $g$ is a pairing between $KR_6(X)$ and $KR^{-6}(X)$, 
  $$
   \langle [(X, \hat{H}, \gamma)], [g] \rangle \in KO^{-6}(pt) 
  $$
  Finally, the topological index is obtained by another ``twist'' back to the original KQ-cycle \eqref{KQ6cyc},
  $$
  ind(g) = \langle [(X, \tilde{H}, \gamma)], [g] \rangle \in KO^{-2}(pt) = \mathbb{Z}_2 
  $$

\end{proof}

In fact, the class of \eqref{genKR} representing Majorana zero modes, denoted by $[(X, \tilde{H}, \gamma)]$, is viewed as an element in the 2nd KR-homology,
$$
[(X, \tilde{H})] = [ (C(X), L_{\mathbb{H}}^2(X, \mathcal{B}), \tilde{H}, J, \gamma)] \in KR_2(X) = KR_2(T^*X)
$$
since the time reversal operator $\Theta$ doubles the dimension, i.e., from $\mathbb{C}$ to $\mathbb{H}$.
By Kasparov's analytical K-homology of $C(X)$, it has an analytical index comes from $KR_2(T^*X)$, where $T^*X$ stands for the cotangent bundle of $X$.

On the other hand, by the construction of the topological index by Atiyah, 
the symbol class of the Hamiltonian $\tilde{H}$ (viewed as a skew-adjoint operator) acting on Majorana zero modes comes from the 2nd KR-theory, 
$$
[\sigma(\tilde{H})] \in KR^{-2}(T^*X)
$$
Therefore, the pairing between KQ-homology  and KQ-theory of Majorana zero modes 
is reduced to the pairing between $KR^{-2}(T^*X)$ and $KR_2(T^*X)$. The same idea applies to the odd dimensional case if we lift the construction 
to the cotangent bundle and consider the analytical and topological indices of the corresponding symbol class.

 \subsection{Noncommutative topology}
 According to  the Chern--Weil theory,  the classical Chern character is a map from K-theory to de-Rham cohomology. 
 Furthermore, the Chern character is an isomorphism when the torsion subgroup of the K-group is annihilated by tensoring with $\mathbb{Q}$ 
 and the de-Rham cohomology has rational coefficients.
 
 In noncommutative topology, one has Connes--Chern character maps from K-theory to cyclic homology, and from K-homology to cyclic cohomology respectively.
 As a result, the topological index can be computed by pairing (periodic) cyclic cohomology with (periodic) cyclic homology instead of pairing K-homology  with  K-theory.
 This part about Connes--Chern characters is closely  following the book \cite{K09} and Connes' original paper \cite{C85}.
 Furthermore, the local index formula by Connes and Moscovici provides another way to compute the topological index by residue traces \cite{CM95, H06}. 
 Finally, we generalize the topological index of Majorana zero modes using the powerful tools from noncommutative topology.
 
 \subsubsection{Connes--Chern character of K-theory}
 Let $A$ be a $C^*$-algebra representing a noncommutative manifold,
 Connes constructed the Connes--Chern character maps from
 K-theory to cyclic homology groups,
 $$
 Ch_{2n} : K_0(A) \rightarrow HC_{2n}(A), \quad [e] \mapsto \frac{1}{n!} Tr(e \otimes e \cdots \otimes e)
 $$
 where $Tr$ is the operator trace, and there are $(2n+1)$-copies of $e$ in the trace.
 $$
 \quad  Ch_{2n+1} : K_1(A) \rightarrow HC_{2n+1}(A), \quad [u] \mapsto Tr(u^{-1}\otimes u \otimes \cdots\otimes u^{-1} \otimes u)
 $$
   there are $(n+1)$-pairs of $u^{-1}$ and $u$ in the trace.
 
 It is also convenient to equivalently define the above Connes--Chern characters by a pairing between cyclic cohomology and K-theory,
 $$
 HC^{2n}(A) \times K_0(A) \rightarrow \mathbb{C}, \quad HC^{2n+1}(A) \times K_1(A) \rightarrow \mathbb{C}
 $$
 Furthermore,  these maps induce a pairing between periodic cyclic cohomology with K-theory,
 $$
 HP^0(A) \times  K_0(A)  \rightarrow \mathbb{C}, \quad HP^1(A) \times  K_1(A)   \rightarrow \mathbb{C}
 $$
 More precisely, for an even $(b, B)$-cocycle $\phi  = (\phi_0, \phi_2, \cdots, \phi_{2k}, \cdots)$
 and an idempotent element $e \in A$, the pairing between $HP^0(A)$ and  $K_0(A)$  is defined by 
 $$
 \langle [\phi], [e] \rangle = \phi_0(e) +\sum_{k = 1}^\infty (-1)^{k} \frac{ (2k)! }{k!} \phi_{2k}(e-\frac{1}{2}, e, \cdots, e)
 $$
 And for an odd $(b, B)$-cocycle $\phi  = (\phi_1, \phi_3, \cdots, \phi_{2k+1}, \cdots )$ 
 and an invertible element $u \in A$, the pairing between $HP^1(A)$ and  $K_1(A)$  is defined by 
 $$
 \langle [\phi], [u] \rangle = \frac{1}{\Gamma(\frac{1}{2})}\sum_{k = 0}^\infty {(-1)^{k+1}} k! \phi_{2k+1}(u^{-1}, u, \cdots, u^{-1}, u)
 $$
 
 \subsubsection{Connes--Chern character of K-homology}
 Let $(H, F)$  be an odd $p$-summable Fredholm module over a $C^*$-algebra $A$ and let $n$ be an integer such that $2n \geq p$.
 The odd Connes--Chern character of $(H,F)$, denoted by $Ch^{2m-1}(H, F)$, is defined by 
 $$
 Ch^{2m-1}(H, F) (a_0, \cdots, a_{2m-1}) = (-1)^m \frac{2 \Gamma(m - \frac{1}{2})}{\Gamma(\frac{1}{2})} Tr(F[F, a_0] \cdots [F, a_{2m-1}])
 $$
  For any integer $m \geq n$, the class of the cyclic cocycle $Ch^{2m-1}(H, F)$ is stabilized in the odd periodic 
 cyclic cohomology group, call it the stable odd Connes--Chern character of $(H,F)$, i.e., 
 $$
 Ch^{1} (H, F) = [Ch^{2m-1}(H, F)] \in HP^1(A), \quad m \geq n
 $$
 In other words, the Connes--Chern character defines a map from K-homology to periodic cyclic cohomology,
 $$
   Ch^{1} : K^1(A) \rightarrow HP^1(A), \quad [(H,F)] \mapsto Ch^{1} (H, F)
 $$
 
 Let $(H, F, \gamma)$ be an even $p$-summable Fredholm module over $A$ and let $n$ be an integer such that $2n+1 \geq p$.
 The even Connes--Chern character of $(H, F, \gamma)$, denoted by $Ch^{2m}(H, F, \gamma)$, is defined by,
 $$
 Ch^{2m}(H, F, \gamma)(a_0, \cdots, a_{2m}) = \frac{(-1)^m m!}{2} Tr(\gamma F[F, a_0] \cdots [F, a_{2m}])
 $$
 For any integer $m \geq n$, the class of the cyclic cocycle $Ch^{2m}(H, F, \gamma)$ is stabilized in the even periodic 
 cyclic cohomology group, call it the stable even Connes--Chern character of $(H, F, \gamma)$, i.e.,
 $$
 Ch^{0} (H, F, \gamma) = [Ch^{2m}(H, F, \gamma)] \in HP^0(A), \quad m \geq n
 $$
 In other words, the Connes--Chern character defines a map from K-homology to periodic cyclic cohomology,
 $$
   Ch^{0} : K^0(A) \rightarrow HP^0(A), \quad [(H,F, \gamma)] \mapsto Ch^{0} (H, F, \gamma)
 $$
 
 As a remark, given a spectral triple $(A, H, D)$, if the Dirac operator $D$ is invertible, then one defines $F = D|D|^{-1}$ as the phase of $D$,
 and the triple $(A, H, F)$ is a Fredholm module. Another commonly used definition of $F$ from a Dirac operator is $F = D(1+ D^2)^{-1/2}$
 if $D$ is not invertible.
 
 \subsubsection{Index pairing}
 
 Let $(H = H^+ \oplus H^-, F, \gamma)$ be an even  Fredholm module over a $C^*$-algebra $A$ and let $e \in A$ be an idempotent.
 Restricting the operator $eFe$ to the subspace $eH^+$,  one obtains a Fredholm operator,
 $$
 F_e^+= eFe|_{eH^+ } : eH^+ \rightarrow eH^-
 $$
 In addition, the Fredholm index of $F^+_e$ can be computed by 
 $$
  index(F^+_e) = \frac{(-1)^{n}}{2}Tr(\gamma F[F,e] [F, e] \cdots [F, e])
 $$
 If we use the notation $de = [F, e]$, then the above Fredholm index is equivalently written as
 $$
 index(F^+_e) = {(-1)^{n}} Tr[\gamma e(de)^{2n+2}] = {(-1)^{n}} Tr[\gamma (edede)^{n+1}]
 $$
 Furthermore, using the pairing between cyclic cohomology and  homology,
 $$
 \langle \,\, , \,\, \rangle: HC^{2n}(A) \times HC_{2n} (A) \rightarrow \mathbb{C}
 $$
 the above Fredholm index can be computed by the pairing,
 \begin{equation}
    index(F^+_e) = \langle Ch^{2n}(H, F, \gamma), Ch_{2n}(e) \rangle 
 \end{equation}
which is further written as
\begin{equation}
    index(F^+_e) = \langle Ch^{0}(H, F, \gamma), Ch_{0}(e) \rangle 
\end{equation}
provided $n$ is in the stable range.

Let $(H, F)$ be an odd  Fredholm module over $A$ and let $u \in A$ be an invertible element. Define  
the (positive) projection operator by,
$$
P = \frac{1+ F}{2}: H \rightarrow H
$$
so that $PuP: PH \rightarrow PH$ is a Fredholm operator.
As a remark, it is also possible to define the (negative) projection by $P = (1-F)/2$.
Now the Fredholm index of $PuP$ can be computed by 
$$
index(PuP) = \frac{(-1)^n}{2^{n}} Tr(F[F, u^{-1}][F, u] \cdots [F, u^{-1}][F, u])
$$
Alternatively, using $du= [F, u]$ and $du^{-1} = [F, u^{-1}]$, 
$$
index(PuP) = \frac{(-1)^n}{2^{n}} Tr[F(du^{-1}du)^n]
$$
Furthermore, using the pairing between cyclic cohomology and  homology,
 $$
 \langle \,\, , \,\, \rangle: HC^{2n-1}(A) \times HC_{2n-1} (A) \rightarrow \mathbb{C}
 $$
 the above Fredholm index can be computed by the pairing,
 \begin{equation}
    index(PuP) = \langle Ch^{2n-1}(H, F), Ch_{2n-1}(u) \rangle 
 \end{equation}
which is further written as
\begin{equation}
    index(PuP) = \langle Ch^{1}(H, F), Ch_{1}(u) \rangle 
\end{equation}
provided $n$ is in the stable range.

The above discussion is usually summarized in the following commutative diagram \cite{K09},
 $$
   \xymatrixcolsep{5pc}\xymatrix{
  K^*(A) \times  K_*(A) \ar@<4ex>[d]^{Ch_*} \ar@<-5ex>[d]^{Ch^*}  \ar[r]^-{index}   & \mathbb{Z}  \ar[d]    \\
      HP^*(A) \times  HP_*(A) \ar[r]  &  \mathbb{C}     }
   $$
In noncommutative topology, the Fredholm index originally computed by pairing K-homology with K-theory can now
be computed by pairing periodic cyclic cohomology with periodic cyclic homology after applying the Connes--Chern characters.

For example, let  $(A, H, D)$ be a 2-summable spectral triple representing the noncommutative Riemannian geometry of $A$. 
One defines the associated Fredholm module by $(A, H, F, \gamma)$, 
where $F = D|D|^{-1}$ and $H = H^+ \oplus H^-$ is $\mathbb{Z}_2$-graded with the grading $\gamma$. 
For the class of a projection  $[p] \in K_0(A)$, its topological index can be computed by
 $$
 ind(p)  = \frac{1}{2 }Tr(\gamma p [F, p][F, p]) = \frac{1}{2 }Tr(\gamma p dpdp)
 $$
  with $da = [F,a]$.
 
 If $(A, H, D)$ is a 3-summable spectral triple, then one has the associated Fredholm module $(A, H, F)$, and one defines 
 the projection $P = (1+F)/2$ as usual. 
 For the class of a unitary  $[u] \in K_1(A)$, the Fredholm index of $PuP$ can be computed by 
 $$
 ind(PuP) = - \frac{1}{8}Tr[F( du^{-1}  du)^3] = -\frac{1}{4}Tr[( u^{-1}  du)^3]
 $$

 \subsubsection{Local index formula}
 The local index formula by Connes and Moscovici gives another way to compute the topological index by residue traces and zeta functions. 
By the Hochschild character theorem, the Connes--Chern characters of K-homology are cohomologous to some Hochschild characters,
 that is, they are the same as classes in cyclic cohomology
 $$
   [Ch_0(H, F, \gamma)] = [Ch_0(H, D, \gamma)], \quad [Ch_1(H, F)] = [Ch_1(H, D)]
 $$
 In other words, the Hochschild character $Ch_*(H, D)$ of a spectral triple $(A, H, D)$ gives the local expression of the topological index.
 
  Let $(A, H, D)$ be a regular spectral triple, 
 then the Hochschild cocycle is defined by
 \begin{equation}
    Ch^{n} (H, D)(a_0, \cdots, a_{n}) = \frac{\Gamma(1+ n/2)}{n \cdot n!} Tr_\omega(\varepsilon a_0[D, a_1]\cdots [D,a_{n}]|{D}|^{-n})  
 \end{equation}
  where $Tr_\omega$ is the Dixmier trace. Here $\varepsilon = 1$ in the odd case, and $\varepsilon = \gamma$ the grading operator in the even case.
 The Dixmier trace can be expressed in terms of residue trace, which is the contribution from the top local spectral invariant. Because of the 
 dimension spectrum, there are also other contributions from higher order terms at other dimension spectrum, which can be computed by residue traces.

 The first interesting local index formula happens in 3 dimensions. If  $({A}, {H}, {D})$ is a regular 3-summable spectral triple   
and $u \in {A}$ is a unitary operator,  then
the Fredholm index of $PuP$  can be computed by pairing $K_1({A})$ with a $(b, B)$-cocycle $(\phi_1, \phi_3)$,
\begin{equation}
 Index(PuP) = \phi_1(u^{-1}, u) - \phi_3(u^{-1}, u, u^{-1}, u)
\end{equation}
With the notation of noncommutative integral or residue trace,
\begin{equation}
   \cutint a = Res_{z=0} \text{Tr}\, (a |{D}|^{-z}) 
\end{equation}
one has the following expressions for $\phi_1$ and $\phi_3$ when the dimension spectrum is simple,
\begin{equation}\label{cycphi1}
 \phi_1(a^0,a^1) = \cutint a^0da^1|{D}|^{-1} - \frac{1}{4}\cutint a^0 \nabla(da^1) |{D}|^{-3}+ \frac{1}{8}\cutint a^0 \nabla^2 (da^1)|{D}|^{-5}
\end{equation}
\begin{equation}\label{cycphi3}
 \phi_3(a^0, a^1, a^2, a^3) = \frac{1}{12}\cutint a^0 da^1da^2da^3 |{D}|^{-3}
\end{equation} 
$da = [{D}, a]$ and $\nabla (a) = [{D}^2, a]$.

 \subsubsection{Noncommutative topological index}
 
 In this section, we will apply the machinery of noncommutative topology to define the noncommutative  $\mathbb{Z}_2$ invariant as a topological index. 
 First of all, we will model Majorana zero modes by KQ-cycles (or generalized KR-cycles) over $C^*$-algebras.  
 Then the topological index will be defined by the pairing between periodic cyclic cohomology and periodic cyclic homology after
 applying the Connes--Chern characters to K-homology and K-theory. In addition,  the local index formula gives another way to define
 the noncommutative $\mathbb{Z}_2$ invariant by local spectral invariants.
 
 Given a complex $C^*$-algebra $A$ representing the function algebra of the underlying noncommutative space, or roughly, 
 $A$ defines a noncommutative space. Let $\mathcal{H}$ be a complex Hilbert space that models the physical Hilbert space of a topological insulator.
 There is a representation of $A$ on the bounded operators of $\mathcal{H}$, i.e., $\pi: A \rightarrow B(\mathcal{H})$, so that $A$ is a noncommutative space of bounded observables on $\mathcal{H}$.
 
 Time reversal symmetry defines the time reversal operator $\Theta$, which is an anti-unitary operator with $\Theta^2 = -1$
 acting on both $\mathcal{H}$ and $A$. With the time reversal operator, $(\mathcal{H}, \Theta)$ can be viewed as a Hilbert space over quaternions $\mathbb{H}$, denoted by $\mathcal{H}_\mathbb{H}$.
 Or equivalently, $\mathcal{H}_\mathbb{H}$ is decomposed as $\mathcal{H}_\mathbb{H} = \mathcal{H}_\mathbb{C} \oplus \Theta \mathcal{H}_\mathbb{C} $ 
 with the quaternionic structure defined by $\Theta$.
 In other words, the pair $(\mathcal{H}, \Theta)$ defines a quaternionic Hilbert space with respect to the quaternionic structure $\Theta$.
 In addition, the pair $(A, \Theta)$ is called a \emph{real} $C^*$-algebra with the general real structure (or Quaternionic structure) 
 $\Theta$ over the quaternionic Hilbert space $(\mathcal{H}, \Theta)$.
 Then one considers the KR-theory (or KQ-theory) of $(A, \Theta)$, denoted by $KR_*(A)$ (or $KQ_*(A)$) \cite{R15}. 
 Inspired by the classical case, it is better to use $KQ_*(A)$ for a topological insulator due to $\Theta^2 = -1$. Therefore, 
 given a projection $p \in A$ such that $p^2= p = p^*$, its class $[p] \in KQ_0(A)$ represents an isomorphism class of 
 virtual Quaternionic vector bundles over $A$.   
 
 On the other hand, we use KQ-cycles  (or generalize  KR-cycles) over $C^*$-algebras to model Majorana zero modes.  
 Although there is no local spectral 
 flows, we can still compute the Fredholm index, which is viewed as the analytical index of Majorana zero modes.

 Let $D$ be a skew-adjoint Fredholm operator  representing a Hamiltonian acting on the Hilbert space 
 $\mathcal{H}_\mathbb{H} = \mathcal{H}_\mathbb{C} \oplus \Theta \mathcal{H}_\mathbb{C} $. Based on the decomposition of the Hilbert space,
 we can decompose $D$ accordingly,
 $$
 D = \begin{pmatrix}
       0 & \Theta H \Theta^* \\
       H & 0
     \end{pmatrix} : \,\, \begin{matrix}
                      \mathcal{H}_\mathbb{C} \\
                         \oplus \\
                         \Theta \mathcal{H}_\mathbb{C}
                         \end{matrix} \rightarrow 
                             \begin{matrix}
                            \Theta  \mathcal{H}_\mathbb{C} \\
                             \oplus \\
                            \mathcal{H}_\mathbb{C}
                             \end{matrix}
 $$
 where $H$ is viewed as the Hamiltonian acting on $\mathcal{H}_\mathbb{C}$ and $\Theta H \Theta^*$ is the Hamiltonian acting on $\Theta \mathcal{H}_\mathbb{C}$.
 Obviously, $H$ and $\Theta H \Theta^*$  are similar, so that they have the same spectrum. Furthermore, $D$ is an unbounded operator if and only if
 $H$ is an unbounded operator, and $D$ is skew-adjoint, i.e., $D^* = - D$, if and only if $H$ satisfies the condition $H^* = - \Theta H \Theta^*$.

 By assumption, $D$ is a Fredholm operator, $\ker D$ is finite dimensional. So the analytical index of Majorana zero modes can be defined by
 \begin{equation}
    ind_a(D) = \dim \ker D \quad \text{mod 2}
 \end{equation}

 Define the real structure   
 $  J = \begin{pmatrix}
         0 & \Theta^* \\
         \Theta & 0
      \end{pmatrix} $ such that $J = J^*$ and $J^2 = 1$. Finally, the grading operator 
      $\gamma = \begin{pmatrix} 
                  1 & 0  \\
                  0 & -1
                \end{pmatrix} $ is added by hand.
In sum, the quintuple $(A, \mathcal{H} \oplus \Theta \mathcal{H}, D, J, \gamma )$ defines an even KQ-cycle
such that 
$$
 JD = - DJ, \quad J^2 = 1, \quad J\gamma = - \gamma J
$$
which is viewed as a $KR_2$-cycle modeling Majorana zero modes.
 
 \begin{defn}
  For a 2-summable KQ-cycle $(A, \mathcal{H} \oplus \Theta \mathcal{H}, D, J, \gamma )$ and a projection $p \in A$ representing
  a class $[p] \in KR_{2}(A)$ that generates a $\mathbb{Z}_2$ component, 
  the 2d noncommutative topological $\mathbb{Z}_2$ invariant is defined by the topological index,
  \begin{equation}
      ind(p) = \langle Ch^0(D, \gamma), Ch_0(p) \rangle =  \frac{1}{2 \pi}Tr(\gamma p [F, p][F, p]) 
 \end{equation}
  where $F$ is defined by functional calculus $  F := D |D|^{-{1}}$.
 \end{defn}

 In other words, $ind(p)$ is the Fredholm index of $pDp|_{p\mathcal{H}}: p\mathcal{H} \rightarrow p \Theta \mathcal{H}$, 
 $$
 pDp = \begin{pmatrix}
       0 & \Theta pHp \Theta^* \\
       pHp & 0
     \end{pmatrix} : \,\, \begin{matrix}
                      p\mathcal{H}_\mathbb{C} \\
                         \oplus \\
                         \Theta p\mathcal{H}_\mathbb{C}
                         \end{matrix} \rightarrow 
                             \begin{matrix}
                            \Theta p \mathcal{H}_\mathbb{C} \\
                             \oplus \\
                           p \mathcal{H}_\mathbb{C}
                             \end{matrix}
 $$
 where the projection $p$ satisfies the compatibility condition $p \Theta = \Theta p$, that is, 
 $p$ is the identity on $\Theta \mathcal{H}$.
 As a notation, we write $\bar{H} = \Theta H \Theta^*$. 
 Furthermore, if we  write $F$ as
 $$
 F = {D}( D^2)^{-1/2} = \begin{pmatrix}
                                    0 & \bar{ H}( H \bar{H} )^{-1/2} \\
                          H ( \bar{H}H )^{-1/2}     & 0      
                                \end{pmatrix}
 $$
 then the Fredholm index $ind(p)$ is given by (up to a normalization constant) 
 $$
 ind(p) = p[\bar{H}(H\bar{H})^{-1/2}, p ] [H(\bar{H}H)^{-1/2}, p ] - p [H(\bar{H}H)^{-1/2}, p ] [\bar{H}(H\bar{H})^{-1/2}, p ]
 $$
 The above formula has a clear meaning of spectral flow as in the Fredholm index of a Dirac operator. When we 
 consider the parity of Majorana zero modes, we take the mod 2 spectral flow as $ind(p)$ modulo 2. However,
 if the projection $p$ generates a $\mathbb{Z}_2$ component in $KR_2(A)$, then $ind(p)$ is inherently $\mathbb{Z}_2$-valued, so
 in the definition of $ind(p)$ it is redundant to add ``mod 2''.
 
 \begin{defn}
  For a 3-summable $KQ$-cycle
 $(A, \mathcal{H}_\mathbb{H}, D, J) $ and a unitary operator $u \in A$ representing a class $[u] \in KR_6(A)$ that generates a $\mathbb{Z}_2$ component,
 the 3d noncommutative topological $\mathbb{Z}_2$ invariant is defined as the topological index,
  \begin{equation}
   ind(PuP) = \langle Ch^1(D), Ch_1(u) \rangle = \frac{1}{4 \pi^2}Tr[( u^{-1} [F, u])^3] 
 \end{equation}
where the sign of $D$ is $F = D|D|^{-1/2}$ and the projection $P = (1-F)/2$.
 
 \end{defn}

Since in a topological insulator, we are interested in the occupied bands below the Fermi energy, which is assumed to be located at the zero energy level, we define 
the projection operator $P$ as the negative projection.
And by the assumption that  $[u] \in KR_6(A)$ is a  $\mathbb{Z}_2$ generator in the topological KR-theory of $A$, it is not necessary 
to add ``mod 2'' in the definition of $ind(PuP)$.

Finally, we use the local index formula to define the topological index in terms of residue traces.
\begin{defn}
For a 2-summable KQ-cycle $(A, \mathcal{H} \oplus \Theta \mathcal{H}, D, J, \gamma )$ and a projection $p \in A$ representing
  a class $[p] \in KR_{2}(A)$ that generates a $\mathbb{Z}_2$ component, 
  the 2d noncommutative topological $\mathbb{Z}_2$ invariant is defined by the topological index,
\begin{equation}
   ind(p) = \langle Res_{s=0}\Psi(D, \gamma), Ch_0(p) \rangle = \frac{1}{2}\cutint \, \gamma p [D,p][D, p] |{D}|^{-2} 
\end{equation}
\end{defn}

\begin{defn}
  For a 3-summable $KQ$-cycle
 $(A, \mathcal{H}_\mathbb{H}, D, J) $ and a unitary operator $u \in A$ representing a class $[u] \in KR_6(A)$ that generates a $\mathbb{Z}_2$ component,
 the 3d noncommutative topological $\mathbb{Z}_2$ invariant is defined as the topological index,
\begin{equation}
        ind(PuP) =  \langle Res_{s=0}\Psi(D), Ch_1(u) \rangle 
                 = \phi_1(u^{-1}, u)- \phi_3(u^{-1}, u, u^{-1}, u)  
\end{equation}
where $(\phi_1, \phi_3)$ is the $(b, B)$-cocycle defined by \eqref{cycphi1} and \eqref{cycphi3}.
\end{defn}

It is convenient to compute the noncommutative topological $\mathbb{Z}_2$ invariant using the residue traces if we know 
the spectrum of the Dirac operator $D$, which happens for example in isospectral deformation problems.

\section{Kane--Mele invariant}

In this section, we will first reformulate the Kane--Mele invariant as a geometric object
that compares the orientations of determinant and Pfaffian line bundles. Second, we introduce the fixed point $C^*$-algebra of
time reversal symmetry in the noncommutative 2-torus $\mathbb{T}^2_\theta$ and the quantum 3-sphere $\mathbb{S}^3_\theta$ \cite{L13}.
Finally, we define the noncommutative Kane--Mele invariant based on Quillen's construction of determinant line bundle over Fredholm operators.   

\subsection{Geometric Kane--Mele invariant}

The topological index of  Majorana zero modes is basically the Chern character, which can be viewed as the action functional 
of a topological insulator. One step further, if one considers the effective quantum field theory, then an equivalent topological invariant can be defined,
which is the Kane--Mele invariant  and basically it is the exponentiated topological $\mathbb{Z}_2$ index. 
The equivalence between the mod 2
topological index and the Kane--Mele invariant was proved in \cite{FM13, KLW1501}. From a different perspective, the Kane--Mele invariant gives rise to a fully 
extended topological quantum field theory \cite{KLW16}. In this paper, we will focus on the geometric picture of the Kane--Mele invariant 
with the help of determinant and Pfaffian line bundles.

Now let us first recall the definition of the Kane--Mele invariant \cite{KM05},
which works in all dimensions.
From the topological band theory of a topological insulator, its band structure is modeled by the Bloch bundle $\pi: (\mathcal{B}, \Theta) \rightarrow (X, \tau)$.
One always assumes that the band structure is non-degenerate, that is, the Bloch bundle has the decomposition $\mathcal{B} = \oplus_{i =1}^N \mathcal{B}_i$ if it has
$N $ occupied bands, and each sub-bundle $\mathcal{B}_i$ is of rank 2. The characteristic feature of a topological insulator is completely determined by the top band, i.e., the top sub-bundle $\mathcal{B}_N$, we call it
the Hilbert bundle from now on, denoted by $\mathcal{H} = \mathcal{B}_N$. 

The rank 2 Hilbert bundle $\pi: \mathcal{H} \rightarrow X$   is characterized by the transition function
$w: X \rightarrow U(2)$.
By the $\mathbb{Z}_2$-equivariant CW complex structure of the momentum space $(X, \tau)$, $w$ is entirely determined 
by the fixed points $X^\tau$. So the Hilbert bundle is obtained from the trivial bundle $\pi: X \times \mathbb{C}^2 \rightarrow X$ by twisting around $X^\tau$,
and the twists are kept track of by $w: X^\tau \rightarrow U(2)$. One important property of $w$ is that it is a skew-symmetric matrix at $X^\tau$,
i.e., $w^T(x) = - w(x)$ for any $x \in X^\tau$. 

The Kane--Mele invariant of a topological insulator is defined by,
 \begin{equation} \label{KMinv}
   \nu : =  \prod_{x \in X^\tau} sgn(pf [w(x)]) = \prod_{x \in X^\tau} \frac{pf [w(x)]}{\sqrt{\det [w(x)]}}   
  \end{equation} 
where $X^\tau$ is the set of fixed points, and $w: X \rightarrow U(2)$ is the transition function of the Hilbert bundle $\pi: \mathcal{H} \rightarrow X$. 
Since the transition function $w$ becomes  skew-symmetric at $X^\tau$,
it makes sense to take the Pfaffian function. Thus the Kane--Mele invariant is defined as the product of the signs of Pfaffians over the fixed points,
recall that for a skew-symmetric matrix $A$ ($A^* = - A$) one has the relation $det(A) = [pf(A)]^2$.

Let us look into the geometric picture of the Kane--Mele invariant.
 The Hilbert bundle $\pi: \mathcal{H}   \rightarrow X $  is of rank 2, it can be further assumed to have a pair of almost global  sections 
 except for one fixed point,
 $$
 (\psi, \Theta \psi) \in \Gamma(X, \mathcal{H})
 $$
 with possible twists around $X^\tau$ recorded by $w$.
 Define its determinant line bundle   as  
$$
Det\, \mathcal{H} : = \wedge^{2} \mathcal{H} \rightarrow X
$$ which is a complex line bundle with a global section
$$
\psi \wedge \Theta \psi \in \Gamma(X, Det \,\mathcal{H})
$$ with possible twists at $X^\tau$ induced from $w$. Notice that the determinant line bundle is trivializable, since 
the first Chern class $c_1(Det \mathcal{H}) = 0$.

The time reversal operator  $\Theta$ on $\mathcal{H}$ induces an action $ \det \Theta$   on $Det\, \mathcal{H}$. 
  More precisely, for the global  section 
    $\psi \wedge \Theta \psi \in \Gamma(X, Det\, \mathcal{H})$, the action of $\det \Theta$  is defined by
   \begin{equation} \label{RealCond}
          \det \Theta  \cdot (\psi \wedge \Theta \psi) = \Theta \psi \wedge \Theta^2 \psi  =  \Theta \psi \wedge (- \psi) =  \psi \wedge \Theta \psi 
   \end{equation}
  In other words, $\psi \wedge \Theta \psi$ is a real section with respect to $\det \Theta$. 
  Since $\det \Theta$ is a real structure such that $(\det \Theta)^2 = 1$, the determinant line bundle 
  $\pi: (Det \, \mathcal{H}, \det \Theta) \rightarrow (X, \tau)$ is a Real vector bundle. 
  Indeed, from the real condition \eqref{RealCond} $Det \, \mathcal{H}$ is a Real vector bundle
  such that $\det \Theta\, Det \, \mathcal{H} \cong Det \, \mathcal{H} $.

  It is well-known that the first Chern class of $\mathcal{H}$ can be represented by the determinant line bundle as an element
  in the Picard group, which is isomorphic to the 2nd cohomology group $H^2(X, \mathbb{Z})$, 
  $$
  c_1(\mathcal{H}) = [Det \, \mathcal{H}] \in H^2(X, \mathbb{Z})
  $$
  It was proved in \cite{KLW1501} that the Hilbert bundle is isomorphic 
  to its conjugate bundle $\overline{\mathcal{H}}$ or its dual bundle $\mathcal{H}^*$, 
  $$
  \mathcal{H} \cong \overline{\mathcal{H}} \cong \mathcal{H}^* 
  $$
  So we have  $Det \, \mathcal{H} \cong Det  \, \mathcal{H}^*$,  
  and then 
  $$
  2 c_1(\mathcal{H}) =[Det\, \mathcal{H}] + [Det\, \mathcal{H}^*] = 0
  $$ This gives another proof of the fact that the first Chern class 
  of the Hilbert bundle $\mathcal{H}$ is a 2 torsion.
   
  With the real condition $\det \Theta\, (\psi \wedge \Theta \psi) = \psi \wedge \Theta \psi$,  $\psi \wedge \Theta \psi$ is called a Majorana state, and 
  $\psi$ or $\Theta \psi$ is called a chiral state.
  Note that the chiral states in a Majorana state have the same value $\psi(x)  =  \Theta \psi(x) $ only at some fixed point $x \in X^\tau$.     
  $X^\tau$ is the set of real points with respect to $\tau$, 
  and the restriction of a chiral state, i.e., $\psi|_{X^\tau}$ or $\Theta \psi|_{X^\tau}$, is a real state with respect to $ \Theta$.
  In other words,  a restricted chiral state is a   section of  a \emph{real} line bundle over the real points $X^\tau$.  
   This \emph{real} line bundle is called the Pfaffian line bundle  
  $(Pf,  \Theta) \rightarrow X^\tau$, or simply $Pf \rightarrow X^\tau$, since $ \Theta$ is a quaternionic structure only on the fixed points $X^\tau$. 
  This is the genuine Pfaffian line bundle defined by real states over real points with respect to $\Theta$ and $\tau$ respectively.
  
   The structure group of the above Pfaffian line bundle $\pi: Pf \rightarrow X^\tau$ is $GL(1, \mathbb{C}) = \mathbb{C}^\times$, 
   and it can be extended to the whole momentum space $X$,
  $$
   \pi: Pf \rightarrow X, \quad \text{with} \quad h: X^\tau \rightarrow \mathbb{C}^\times
  $$
  where $h$ is the transition function. In other words, the Pfaffian line bundle $\pi: Pf \rightarrow X$ is obtained from 
  the trivial line bundle $X \times \mathbb{C} \rightarrow X $ by twisting at $X^\tau$, and the twists are kept track of by $h: X^\tau \rightarrow \mathbb{C}^\times$.
  Alternatively, by taking one chiral state $\psi$ or $\Theta \psi$ from the Hilbert bundle $\mathcal{H}$, 
  we define the Pfaffian line bundle $\pi: Pf(\mathcal{H}) \rightarrow X$, or simply $\pi: Pf \rightarrow X$. 
  
  The relation between the Pfaffian line bundle  $ Pf(\mathcal{H}) \rightarrow X$ and
   the determinant line bundle $Det(\mathcal{H}) \rightarrow X$ can be established as follows. 
   There exists a canonical isomorphism,
   $$
   Pf(\mathcal{H}) \otimes Pf(\mathcal{H}) \cong Det(\mathcal{H}), \quad (\psi, \psi) \mapsto \psi \wedge \Theta \psi
   $$ 
   which is commonly written as $Pf^2 \cong Det$.
   Note that if we chose the other chiral state to define the Pfaffian line bundle, the determinant line bundle would be the same,
   $$
   Pf(\mathcal{H}) \otimes Pf(\mathcal{H}) \cong Det(\mathcal{H}), \quad (\Theta\psi, \Theta\psi) \mapsto \Theta\psi \wedge \Theta(\Theta \psi) = \psi \wedge \Theta \psi
   $$ 
   
   The relation $Pf^2 = Det$ can be further clarified by looking at the relevant structure groups and transition functions.
   The Hilbert bundle has the structure group $U(2)$, if we assume
   that the physical states are normalized such that $\langle \psi, \psi \rangle = 1$, then $U(2)$ is reduced to $SU(2)$. The Pfaffian line 
   bundle has the structure group $\mathbb{C}^\times$, which is reduced to $U(1)$ after normalization. The determinant line bundle is isomorphic 
   to the trivial bundle $Det(\mathcal{H}) \cong X \times \mathbb{C}$, whose structure group is trivial.
   One way to see this is for the ideal $I$ defining the exterior product,
   $$
   Det \, \mathcal{H} = \mathcal{H} \wedge \mathcal{H} = \mathcal{H} \otimes \mathcal{H} / I \cong \mathcal{H} \otimes \mathcal{H}^* / I \cong X \times \mathbb{C}
   $$
   because of the isomorphism $\mathcal{H} \cong \mathcal{H}^*$. By the relation between the Hilbert bundle and
   the Pfaffian line bundle,  their transition functions are connected by the Pfaffian function, i.e., $h = pf(w)$, at the fixed points $X^\tau$,
   $$
     pf: SU(2) \rightarrow U(1), \,\, w(x) =\begin{pmatrix}
                  0 & e^{-i\beta(x)} \\
                  -e^{i\beta(x)} & 0
                \end{pmatrix}  \mapsto  h(x)=e^{-i\beta(x)}, \,\, \forall x \in X^\tau
   $$
   for some real-valued function $\beta: X \rightarrow \mathbb{R}$. Furthermore, we can assume the function $\beta$ respects the time 
   reversal symmetry, and reduce both $SU(2)$ and $U(1)$ to $\mathbb{Z}_2$ \cite{KLW16}.
   
   \begin{prop}
     The geometric Kane--Mele invariant is given by the product of possible twists of the Pfaffian line bundle over the fixed points,
     \begin{equation}
   \nu = \prod_{x \in X^\tau} h(x) = \prod_{x \in X^\tau} pfaff(x)
\end{equation}
where $h: X^\tau \rightarrow \mathbb{Z}_2$ is the transition function of $\pi: Pf \rightarrow X^\tau$, and $pfaff$ is the global section of
$\pi: Pf \rightarrow X^\tau$ representing its orientation.
   \end{prop}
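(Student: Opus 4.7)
The plan is to identify the transition function $h$ of the Pfaffian line bundle with the signs of Pfaffians appearing in the original definition \eqref{KMinv}, so that the product of twists over $X^\tau$ literally reproduces $\nu$.

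First I would exploit the canonical isomorphism $Pf(\mathcal{H})\otimes Pf(\mathcal{H})\cong Det(\mathcal{H})$ at the level of cocycles. Over each fixed point $x\in X^\tau$ the Hilbert bundle has transition function $w(x)\in SU(2)$ (after normalizing $\langle\psi,\psi\rangle=1$ to reduce $U(2)$ to $SU(2)$), and $w$ becomes skew-symmetric there. The squaring map $h\mapsto h^2$ corresponds under $Pf^2\cong Det$ to taking determinants, so the cocycle $h$ of $Pf\to X^\tau$ must satisfy $h(x)^2=\det [w(x)]$. Since $w(x)^T=-w(x)$, the identity $\det[w(x)]=pf[w(x)]^2$ gives $h(x)=\pm\, pf[w(x)]$, and the explicit form
\[
w(x)=\begin{pmatrix} 0 & e^{-i\beta(x)}\\ -e^{i\beta(x)} & 0\end{pmatrix}, \qquad pf[w(x)] = e^{-i\beta(x)}
\]
recalled in the text pins down $h(x)=e^{-i\beta(x)}=pf[w(x)]$, consistent with the Pfaffian map $pf\colon SU(2)\to U(1)$ used earlier.

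Next I would use the fact that $Det(\mathcal{H})$ is canonically trivial: the global section $\psi\wedge\Theta\psi$ is real with respect to $\det\Theta$ by \eqref{RealCond}, so after normalization one may arrange $\det[w(x)]=1$, i.e.\ $\sqrt{\det[w(x)]}=1$ along $X^\tau$. Combined with the previous step this yields
\[
h(x) \;=\; pf[w(x)] \;=\; \frac{pf[w(x)]}{\sqrt{\det[w(x)]}} \;=\; \mathrm{sgn}\bigl(pf[w(x)]\bigr)
\]
once the structure groups $SU(2)$ and $U(1)$ are further reduced to $\mathbb{Z}_2$ via a time-reversal-compatible choice of the phase $\beta$, as indicated in the text and in \cite{KLW16}. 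Under this reduction $h$ takes values in $\{\pm 1\}\subset\mathbb{C}^\times$ and coincides with the local orientation $pfaff(x)$ of the global section of $Pf\to X^\tau$.

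Taking the product over $X^\tau$ then gives
\[
\prod_{x\in X^\tau} h(x) \;=\; \prod_{x\in X^\tau} \mathrm{sgn}\bigl(pf[w(x)]\bigr) \;=\; \nu,
\]
which is the statement. The main obstacle I anticipate is the normalization and gauge-fixing argument in the middle step: one must verify that the reductions $U(2)\leadsto SU(2)$ and $\mathbb{C}^\times\leadsto U(1)\leadsto\mathbb{Z}_2$ are compatible with the time reversal action $\Theta$ and with the choice of chiral section used to define $Pf(\mathcal{H})$, so that the sign $h(x)$ really does agree with $\mathrm{sgn}(pf[w(x)])$ and is independent of the chiral representative; this is what ultimately identifies the intrinsic orientation $pfaff(x)$ with the sign of the Pfaffian cocycle.
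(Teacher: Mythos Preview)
Your argument is correct and rests on the same ingredients as the paper's proof: the triviality of $Det(\mathcal{H})$, the relation $Pf^2\cong Det$, and the reduction of structure groups to $\mathbb{Z}_2$. The route, however, is organized differently. You work at the level of \emph{cocycles}: from $Pf^2\cong Det$ you extract $h(x)^2=\det[w(x)]=pf[w(x)]^2$, pin down $h(x)=pf[w(x)]$ via the explicit $SU(2)$ form, and then normalize using $\det[w(x)]=1$ so that $h(x)=\mathrm{sgn}(pf[w(x)])$, recovering the original Kane--Mele formula term by term. The paper instead works at the level of \emph{sections and orientations}: it observes that $Pf\to X^\tau$ is a trivial real line bundle whose orientation data are encoded in a section $pfaff$ that can be identified with $h$, and interprets the invariant as the ratio $\nu=pfaff/(det|_{X^\tau})$ of real global sections, with the determinant section shown to be identically $1$. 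Your cocycle computation makes the link to the defining formula \eqref{KMinv} completely explicit and justifies the sign identification by algebra, whereas the paper's section-theoretic phrasing emphasizes the geometric meaning of $h(x)=\pm 1$ as a comparison of orientations between the Pfaffian and determinant line bundles. The compatibility of the reductions $U(2)\leadsto SU(2)\leadsto\mathbb{Z}_2$ with $\Theta$ that you flag as the main obstacle is exactly what the paper absorbs into the statement that $Pf$ is a real bundle over $X^\tau$ with $\mathbb{Z}_2$-valued twist data.
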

  \begin{proof}
   
   There exists a canonical orientation 
   of the Pfaffian line bundle represented by a global section $pfaff \in \Gamma(X^\tau, Pf)$ 
   since $Pf \cong X^\tau \times \mathbb{C}$ is a trivial \emph{real} bundle. Here ``real'' means
   complex with respect to the quaternionic structure $\Theta$, so the Pfaffian line bundle can be 
   viewed as $Pf \cong X^\tau \times \mathbb{R}$ with possible twists recorded by 
   $$
   h: X^\tau \rightarrow \mathbb{Z}_2
   $$
   In addition, the global section $pfaff$ can be identified with $h$, since $pfaff$ has the form,
   $$
   pfaff: X^\tau \rightarrow X^\tau \times \mathbb{Z}_2
   $$
   
   After applying the real condition $\det \Theta\, Det \, \mathcal{H} = Det \, \mathcal{H} $,
   the determinant line bundle is viewed as $Det \, \mathcal{H} \cong X \times \mathbb{R}$, so it
   has a real global section 
   $$
   det:X \rightarrow Det\, \mathcal{H} = X \times \mathbb{R} 
   $$ 
    Hence the Kane--Mele invariant is understood
as the ratio of those real global sections over the fixed points,
\begin{equation*}
   \nu =  \frac{pfaff}{det|_{X^\tau}} = \prod_{x \in X^\tau} h(x)  
\end{equation*}
where $h(x) = \pm 1$ is interpreted as the comparison of orientations represented by $pfaff$ and $det$ respectively.
In fact, the global section of the determinant line bundle is trivial, i.e., $det = 1$, so the Kane--Mele invariant is completely
determined by the global section of the Pfaffian line bundle. 
\end{proof}

\subsection{A family index theorem}

Before we generalize the above discussion about the geometric Kane--Mele invariant to noncommutative manifolds, let us recall the relation
between the  determinant (or Pfaffian) line bundle and a family index theorem. 
The first part of this subsection is closely following \cite{F87}.

Let $Y$ be a compact spin manifold, and $D: \mathcal{H}_+ \rightarrow \mathcal{H}_- $ be a Dirac operator mapping $\mathcal{H}_+$ to $\mathcal{H}_-$,
where $\mathcal{H}_+$ and $\mathcal{H}_-$ are finite dimensional Hilbert spaces. Taking the top exterior products of the Hilbert spaces, denoted by
$\det \mathcal{H}_\pm = \wedge^{max} \mathcal{H}_\pm$, then $D$ induces a natural map between them,
$$
\det D : \det \mathcal{H}_+ \rightarrow \det \mathcal{H}_-
$$
In other words, $\det D$ can be viewed as an endomorphism,
$$
\det D \in  (\det \mathcal{H}_+)^* \otimes \det \mathcal{H}_-
$$

Next we promote the above construction on vector spaces to vector bundles. Let $\pi: Z \rightarrow X$ be a vector bundle 
with a typical fiber $Z_x \cong Y$, where each fiber is assumed to
be spin. Now let $D$ be a family of  Fredholm operators parametrized by the base manifold $X$, 
$$
D_x :  (\mathcal{H}_+)_x \rightarrow (\mathcal{H}_-)_x 
$$ 
In other words, $D$ is a bundle map between two Hilbert bundles,
 $$
   \xymatrix{
  \mathcal{H}_+ \ar[rd]  \ar@{-}[rr]^D & \ar[r] & \mathcal{H}_-  \ar[ld] \\
          &  X  & }
 $$
 Define the determinant line bundle of $D$ as a complex line bundle,
 $$
 \pi: Det D \rightarrow X
 $$
 such that each fiber 
 $$
 (DetD)_x = (\det \ker D_x)^* \otimes \det \ker D_x, \quad x \in X
 $$
 where the Hilbert spaces have been identified as
 $$
   (\mathcal{H}_+)_x \cong  (\ker D_x)^*, \quad  (\mathcal{H}_-)_x \cong \ker D_x
 $$

For each $x \in X$, the Fredholm operator $D_x$ has finite dimensional kernel and cokernel, and 
the Fredholm index is the analytical index,
$$
index(D_x) = \dim ker \, D_x - \dim coker \, D_x
$$
Therefore, one defines the index bundle as the formal difference bundle,
$$
\pi: Ind(D) \rightarrow X, \quad Ind (D) = ker D - coker D
$$
where $kerD$ and $coker D$ are vector bundles over $X$. 
Obviously, the index bundle of $D$ gives an element in the K-theory of the base manifold,
$$
[Ind(D)] \in K(X)
$$
Furthermore, the determinant line bundle of the index bundle $\pi: Ind(D) \rightarrow X$ is isomorphic 
to the determinant line bundle of the family of self-adjoint Fredholm operators $D$,
$$
det(Ind(D)) \cong Det(D)
$$
As a consequence,
 the first Chern class of the index bundle is represented by the determinant line bundle of $D$.
$$
c_1(Ind(D)) = [DetD] \in H^2(X, \mathbb{Z})
$$

If the typical fiber $Y$ is spin and $(8k +2)$-dimensional, then there exists a Pfaffian line bundle, 
$$
\pi: Pf(D) \rightarrow X
$$
such that 
$$
Pf(D) \otimes Pf(D) \cong Det(D)
$$
There exists a push-forward map,  called the Gysin map, between the real K-groups of the total and base spaces in $\pi: Z \rightarrow X$,
$$
\pi_!: KO(Z) \rightarrow KO^{-2}(X)
$$
If one considers the index bundle $\pi: Ind(D) \rightarrow X$ in this real case, then one has the identity,
$$
 \pi_!([1]) = [Ind(D)] \in KO^{-2}(X)
$$
where $[1]$ is a generator in $KO(Z)$ \cite{F87}. 

Moreover, by the Chern--Weil theory, there exists a map induced by the Pfaffian form, 
$$
Pfaff: KO^{-2}(X) \rightarrow H^2(X, \mathbb{Z})
$$
where $H^2(X, \mathbb{Z})$ is the 2nd (de-Rham) cohomology group characterizing real line bundles over $X$.
By a theorem in \cite{F87}, one has the relation,
$$
c_1(Pf(D)) = Pfaff([Ind(D)]) \in H^2(X, \mathbb{Z})
$$

Now we apply the above construction to the effective Hamiltonian of Majorana zero modes,
which is a family of skew-adjoint Fredholm operators, 
$$
D(x) = \begin{pmatrix}
                                                 0 & \Theta H(x) \Theta^* \\
                                                 H(x) & 0
                                                \end{pmatrix}
$$
 We define the determinant line bundle of $D$ as
\begin{equation}
   Det(D) =  (\det ker \, H )^* \otimes \det ker \, \Theta H \Theta^{*} 
\end{equation}
Suppose $ker H$ and $ker \Theta H \Theta^{*} $ are 1 dimensional, in this case, 
$$
Det(D) \cong Det(\mathcal{H})
$$
Since the first Chern class of the Hilbert bundle $\pi: \mathcal{H} \rightarrow X$ is a 2-torsion, 
i.e., $2c_1(\mathcal{H}) = 0$,
there exists a Pfaffian line bundle $\pi: Pf(D) \rightarrow X$.

In addition, $H$ and $\Theta H \Theta^{*}  $ have zero modes only at the fixed points $X^\tau$, 
the determinant line bundle $Det(D)$ is viewed as
a vector bundle over $X^\tau$,
$$
\pi: Det(D) \rightarrow X^\tau
$$
And the Pfaffian line bundle is then a real line bundle,
$$
\pi: Pf(D) \rightarrow X^\tau
$$
such that 
$$
Pf(D) \otimes Pf(D) \cong Det(D)
$$
Therefore, the skew-adjoint operator $D$ has the family index, 
$$
[Ind(D)] \in KO^{-2}(X^\tau) 
$$
and the first Chern class is 
$$
c_1(Pf(D)) = Pfaff([Ind(D)]) \in H^2(X^\tau, \mathbb{Z}_2) 
$$

In order to get the topological index from the above family index, one has to project the set of fixed points to a point $p: X^\tau \rightarrow pt$,
which induces the push-forward map in real K-theory $p_*: KO^{-2}(X^\tau) \rightarrow KO^{-2}(pt)$. In \cite{KLW1501}, we realized the abstract point $\{pt\}$
as a specific fixed point with top codimension, which is viewed as the effective boundary in geometry.

\subsection{Fixed point algebra}
 
We will use the noncommutative 2-torus as a concrete example, 
let us first recall its definition and introduce the time reversal symmetry in this context.
We will define the fixed point algebra of the time reversal symmetry, which is the noncommutative analog of the set of fixed points.

In the classical case, if the 2-torus $\mathbb{T}^2$ is parametrized by the angles
   $$
   \mathbb{T}^2 = \{ (e^{i\theta_1}, e^{i\theta_2}) ~|~ \theta_i \in (-\pi, \pi ], i = 1, 2 \}
   $$
   then the time reversal transformation on $\mathbb{T}^2$ is defined by 
   $$
   \tau:  \mathbb{T}^2 \rightarrow \mathbb{T}^2; \quad (e^{i\theta_1}, e^{i\theta_2}) \mapsto (e^{-i\theta_1}, e^{-i\theta_2})
   $$
So $(\mathbb{T}^2, \tau)$ is a Real space with $\tau^2 = 1$.
Let $C(\mathbb{T}^2)$ denote the continuous functions on $\mathbb{T}^2$, if the complex conjugation is denoted by $K$,
$$
K: C(\mathbb{T}^2) \rightarrow C(\mathbb{T}^2), \quad K(f) = \bar{f}
$$
then we view $(C(\mathbb{T}^2), K)$ as a Real function algebra with $K^2 = 1$. The invariant subspace of $C(\mathbb{T}^2)$ with respect to $\tau$ and $K$ is defined as,
$$
C_{R} (\mathbb{T}^2):= \{ f\in C(\mathbb{T}^2) ~|~  Kf(x) = \overline{f(x)} = f(\tau(x))  \}
$$
The time reversal operator $\Theta$ is an anti-unitary operator such that $\Theta^2 = -1$.
For example, if we define the time reversal operator by 
$$
\Theta := i\sigma_2K =\begin{pmatrix}
                       0 & K \\
                       -K & 0
                     \end{pmatrix}
$$
where $\sigma_2$ is the 2nd Pauli matrix, then  $\Theta$ acts on $C_R(\mathbb{T}^2)$ by
$$
 \Theta f = \begin{pmatrix}
                       0 & K \\
                       -K & 0
                     \end{pmatrix} \begin{pmatrix}
                                     f(x) \\
                                     \tau^*f(x)
                                     \end{pmatrix} = \begin{pmatrix}
                                                       Kf(\tau(x)) \\
                                                       -Kf(x)
                                                       \end{pmatrix} = \begin{pmatrix}
                                                                          f(x) \\
                                                                          -\tau^*f(x)
                                                                          \end{pmatrix}
$$
where $\tau^*f(x) = f(\tau(x))$ and $C_R(\mathbb{T}^2) $ is assumed to be diagonally embedded into 
$(1\oplus \tau^*)C_R(\mathbb{T}^2)$ first.
This is the case for the commutative Real $C^*$-algebra $C_R(\mathbb{T}^2) $.

Let $\theta \in \mathbb{R} \setminus \mathbb{Q} $ be an irrational real parameter,
     the noncommutative 2-torus $\mathbb{T}_\theta^2 = C^*(u,v)$ is the universal $C^*$-algebra generated by two unitary operators
 $u,v$  satisfying $uv = e^{2 \pi i \theta} vu$. $\mathbb{T}_\theta^2$ is viewed as the function algebra of the underlying noncommutative manifold. 
 Again, the time reversal operator $\Theta$ is defined as an anti-unitary operator acting on $\mathbb{T}_\theta^2$ such that $\Theta^2 = -1$.
 For example, $\Theta$ can be similarly defined as,
 \begin{equation}
    \Theta := \begin{pmatrix}
                 0 & * \\
                 -* & 0
              \end{pmatrix}
  \end{equation}
where $*(a) = a^*$ is the $*$-operation of the $C^*$-algebra $\mathbb{T}_\theta^2$.
 
 As a concrete example, the abstract noncommutative 2-torus can be realized on a square lattice model. 
 Let  $\mathcal{H} = \ell^2(\mathbb{Z}^2)$ be the Hilbert space, 
 for any $\xi \in \ell^2(\mathbb{Z}^2)$, the unitary operators $u$ and $v$ can be explicitly defined as 
 $$
 u\xi(n,m) = e^{ \pi i m\theta}   \xi(n-1,m), \,\,  v \xi(n,m) = e^{ -\pi i n\theta}   \xi(n,m-1)
 $$ 
 Sometimes $u$ and $v$ are viewed as a pair of conjugate variables, so 
 $$
 \Theta \begin{pmatrix}
          u \\
          v
        \end{pmatrix} = \begin{pmatrix}
                            0 & * \\
                            - * & 0
                         \end{pmatrix} \begin{pmatrix}
                                             u \\
                                             v
                                             \end{pmatrix}  = \begin{pmatrix}
                                                                 v^* \\
                                                                 -u^*
                                                                 \end{pmatrix}
 $$
So the effective result of the action by $\Theta$ is, 
    $$
 \Theta u\xi(n,m) = e^{ \pi i n\theta}   \xi(n,m+1), \,\, \Theta v \xi(n,m) = - e^{ -\pi i m\theta}   \xi(n+1,m)
 $$
 
 \begin{defn}
    Define the time reversal operator $\Theta = i\sigma_2*$ acting on the noncommutative 2-torus, or explicitly,
 $$
  \Theta(u) = v^*, \quad \Theta(v) = - u^*, \quad \Theta(u^*) = v, \quad \Theta(v^*) = -u
 $$
 \end{defn}

 The action of $\Theta$ respects the order of a product since we are dealing with a fermionic system, for example,
 $$
 \Theta(uv) = \Theta(u) \Theta(v) = v^*(-u^*) = -v^*u^*
 $$
 As a consequence, this action is compatible with the relation $uv = e^{2\pi i \theta} vu$.
 
 If we view $(u, v, u^*, v^*)$ as a four-band system, then the time reversal operator is represented by 
 $$
 \Theta = \sigma_0 \otimes i\sigma_2 *= \sigma_1 \otimes i\sigma_2 = \begin{pmatrix}
                                         0 & i\sigma_2 \\
                                         i\sigma_2 & 0 
                                       \end{pmatrix}
 $$
 where $i\sigma_2$ is a skew-symmetric matrix such that $(i\sigma_2)^2 = -\sigma_0= -I_2$.
 
 \begin{defn}
    The fixed point algebra of the time reversal operator $\Theta$, denoted by $(\mathbb{T}^2_\theta)^\Theta$, is the $C^*$-algebra generated by
 $$
  x= e^{\pi i \theta} i uv^*, \quad y = e^{-\pi i \theta}i vu^*
 $$
 \end{defn}

 \begin{prop}
    The fixed point algebra of the time reversal operator $\Theta$ in $\mathbb{T}^2_\theta$ is given by
 \begin{equation}
    (\mathbb{T}^2_\theta)^\Theta = \langle x = e^{\pi i \theta} i uv^* ~|~ xx^*=x^*x = 1 \rangle \cong \mathbb{C}x
 \end{equation}
 \end{prop}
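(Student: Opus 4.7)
The definition preceding the proposition specifies the fixed point algebra as the $C^*$-algebra generated by two elements, $x = e^{\pi i \theta} iuv^*$ and $y = e^{-\pi i\theta} ivu^*$. The proposition asserts that this presentation collapses to a single unitary generator. The plan is therefore to show three things: first that $x$ is a unitary; second that the second generator $y$ is redundant because $y = -x^*$; and third that the resulting $C^*$-algebra is the universal unital $C^*$-algebra on a single unitary element, with no further relations inherited from the ambient $\mathbb{T}^2_\theta$.

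For the first step, a direct computation using the unitarity of $u$ and $v$ gives
$$
xx^* = (ie^{\pi i \theta} uv^*)(-ie^{-\pi i\theta} vu^*) = uv^*vu^* = uu^* = 1,
$$
and symmetrically $x^*x = v\,u^*u\,v^* = 1$. For the second step, I would simply take the adjoint of $x$:
$$
x^* = \overline{ie^{\pi i\theta}}\,(uv^*)^* = -ie^{-\pi i\theta}\, vu^* = -y,
$$
so that $y = -x^*$ already lies in $C^*(x)$, and hence the unital $C^*$-algebra generated by $\{x,y\}$ coincides with $C^*(x)$.

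For the third step it remains to rule out further relations on $x$ coming from the noncommutative structure of $\mathbb{T}^2_\theta$. The plan is to compute the normal form of powers of $x$ by induction, using the commutation $v^* u = e^{2\pi i\theta} u v^*$ (which follows by left- and right-multiplying $uv = e^{2\pi i\theta}vu$ by $v^*$). This yields $x^m = i^m e^{\pi i m^2 \theta}\, u^m v^{-m}$ for each $m \in \mathbb{Z}$, so a putative polynomial relation $\sum_{m} a_m x^m = 0$ in $\mathbb{T}^2_\theta$ would translate to a linear relation $\sum_m a_m \alpha_m\, u^m v^{-m} = 0$ among distinct Weyl monomials with nonzero coefficients $\alpha_m = i^m e^{\pi i m^2\theta}$. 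Since $\theta$ is irrational, the Weyl monomials $\{u^m v^n\}_{(m,n)\in\mathbb{Z}^2}$ are linearly independent in $\mathbb{T}^2_\theta$ by its universal property, forcing all $a_m = 0$. Hence $C^*(x)$ admits the abstract presentation $\langle x \mid xx^* = x^*x = 1\rangle$, which by Gelfand duality is isomorphic to $C(\mathbb{S}^1)$.

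The main technical burden lies in this third step, the linear-independence argument; it is standard for the irrational noncommutative torus but deserves a clean reference to the universal property. The core content of the proposition, the identification $y = -x^*$, is a one-line computation once the normalization constants $e^{\pm\pi i\theta}\cdot i$ in the definitions of $x$ and $y$ are taken into account.
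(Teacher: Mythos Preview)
Your proposal is correct and follows essentially the same approach as the paper: the paper's proof consists of the single displayed line
\[
yx = xy = -1,\quad x^* = -y,\quad xx^* = x^*x = 1,
\]
which is exactly your steps one and two (and $xy = yx = -1$ is immediate from $y=-x^*$ together with unitarity of $x$). Your third step, ruling out further relations via linear independence of the Weyl monomials $u^m v^{-m}$, is not in the paper but is a welcome addition, since the paper's bare assertion $\cong \mathbb{C}x$ leaves that point implicit.
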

 \begin{proof}
    One has the following relations between the generators,
 $$
  yx =xy = -1,\quad x^* = -y, \quad  xx^*=x^*x = 1
 $$
 \end{proof}

 \begin{cor}
     The real K-theory of the fixed point algebra $(\mathbb{T}^2_\theta)^\Theta $  is given by
     \begin{equation}
        KO_i[(\mathbb{T}^2_\theta)^\Theta] \cong KO^{-i}(\mathbb{R}) = KO^{-i-1}(pt)
     \end{equation}
 In particular, we have
\begin{equation}
    KO_0[(\mathbb{T}^2_\theta)^\Theta] = \mathbb{Z}_2, \quad  KO_{1}[(\mathbb{T}^2_\theta)^\Theta] = \mathbb{Z}_2 
\end{equation}
 \end{cor}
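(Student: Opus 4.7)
The plan is to use the preceding proposition to reduce $(\mathbb{T}^2_\theta)^\Theta$ to a one-generator commutative real $C^*$-algebra and then apply a standard suspension computation in real $K$-theory. By that proposition, $(\mathbb{T}^2_\theta)^\Theta$ is the real $C^*$-algebra generated by the single unitary $x = e^{\pi i \theta} i u v^*$, which is $\Theta$-fixed and satisfies $xx^* = x^*x = 1$. Since $x$ is a unitary with no further relations, its complex closure is $C(S^1)$, and the real structure inherited from $\Theta$ makes this into a real $C^*$-algebra of the form $C_\mathbb{R}(S^1, \iota)$, where $\iota$ is an involution on $S^1$ canonically determined by the conjugate-linearity of $\Theta$ acting on characters $\chi(x) = z$.

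Next, I would relate this unital real $C^*$-algebra, at the level of real $K$-theory, to $C_0(\mathbb{R}; \mathbb{R})$—the continuous real-valued functions on the real line with trivial involution. The cleanest route is to treat $C(S^1)$ as the unitization of $C_0(\mathbb{R})$ via one-point compactification and to identify the induced real structure on the open arc as the trivial one, so that $\tilde{KO}_i((\mathbb{T}^2_\theta)^\Theta) \cong KO_i(C_0(\mathbb{R}; \mathbb{R}))$. Then real Gelfand duality together with the suspension isomorphism deliver
\[
KO_i(C_0(\mathbb{R}; \mathbb{R})) \cong KO^{-i}(\mathbb{R}) \cong KO^{-i-1}(\mathrm{pt}),
\]
which is exactly the content of the corollary. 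The explicit values $KO_0 \cong \mathbb{Z}_2$ and $KO_1 \cong \mathbb{Z}_2$ then follow from Bott periodicity, namely $KO^{-1}(\mathrm{pt}) = KO^{-2}(\mathrm{pt}) = \mathbb{Z}_2$.

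The main obstacle is verifying the precise form of the induced real structure so that the reduction to $C_0(\mathbb{R}; \mathbb{R})$ (rather than, say, the Hermitian-function algebra on a circle with two fixed points) is the correct one. I expect this to be handled via a six-term exact sequence in real $K$-theory applied to the ideal decomposition separating the $\Theta$-fixed characters of $C^*_\mathbb{R}(x)$ from the $\Theta$-free locus, together with the real Gelfand duality framework cited as \emph{R15}. Checking that the connecting maps collapse in the correct way to isolate precisely the $\mathbb{Z}_2$-summands in degrees $0$ and $1$, and to kill any spurious free $\mathbb{Z}$-summands coming from the fixed characters, is the calculational heart of the proof.
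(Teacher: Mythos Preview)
Your proposal is broadly correct in spirit and reaches the same conclusion, but it is considerably more elaborate than what the paper actually does. The paper's proof is a single sentence: it simply observes that because the real structure satisfies $\Theta^2 = -1$, the fixed point algebra should be regarded as the \emph{real} algebra $\mathbb{R}x$ rather than $\mathbb{C}x$, and then the identification $KO_i[(\mathbb{T}^2_\theta)^\Theta] \cong KO^{-i}(\mathbb{R}) = KO^{-i-1}(pt)$ is read off directly from that. No Gelfand duality, no unitization sequence, no six-term exact sequence, and no explicit analysis of the involution on characters appears.

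Your route through $C(S^1)$, the one-point compactification, and the identification of the induced real structure is a legitimate way to make the paper's assertion rigorous, and your caution about which real form of $C(S^1)$ actually arises (trivial involution versus an involution with two fixed points) is well placed---indeed, this is exactly the point the paper glosses over. What your approach buys is an honest verification that the quaternionic condition $\Theta^2=-1$ really does select the correct real form so that the suspension computation goes through; what the paper's approach buys is brevity, at the cost of leaving that verification implicit. The six-term exact sequence you mention is more machinery than is needed once the real structure is pinned down, so you could streamline by simply identifying the real form and invoking suspension, as you indicate in your second paragraph.
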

 
 \begin{proof}
   
 Due to the general real structure $\Theta$ such that $\Theta^2 = -1$, the fixed point algebra is again viewed as a real algebra, 
 i.e.,  $ (\mathbb{T}^2_\theta)^\Theta  \cong \mathbb{R}x$.
  
 \end{proof}

Let us look into the geometry of the fixed point algebra $(\mathbb{T}^2_\theta)^\Theta$.
From the concrete form of the generator $x = e^{\pi i \theta} i uv^*$, it creates an imaginary (due to the imaginary unit ``i'') channel 
between $u$ and $v^*$. If we view $uv =  e^{2\pi i \theta} vu$ as a braiding rule, say to rotate $u$ about $v$, then one gets a phase $ e^{2\pi i \theta}$ after a twist. 
So $ e^{\pi i \theta}$ in $x$ is viewed as a half-twist. The same analysis applies to the other generator $y = e^{-\pi i \theta}i vu^*$,
which creates the opposite channel from $v$ to $u^*$ in the four-band system $(u,v, u^*, v^*)$. However, one channel is enough to characterize
the geometry since the opposite channel is easily recovered by $x^* = -y$.

 In 3 dimensions, the easiest example is given by the quantum 3-sphere $\mathbb{S}^3_\theta$  due to its close relation to the noncommutative 2-torus $\mathbb{T}^2_\theta$ \cite{L13}.
 $\mathbb{S}^3_\theta$ is generated by two operators,
 $$
 \alpha = \cos(\eta) u, \quad \beta = \sin(\eta) v, \quad \text{s.t.} \quad \alpha \alpha^* + \beta \beta^* =1
 $$ where
 $u,v$ are generators of $\mathbb{T}^2_\theta$ and the angle $\eta \in [0, \pi/2]$. 
 The time reversal operator in this case is again defined by 
 $$
 \Theta = i\sigma_2 *
 $$ where 
 $*$ is the adjoint-operation in $C^*$-algebras.  
 $\mathbb{S}^3_\theta$ also has a 1d fixed point algebra, 
 $$
 (S^3_\theta)^\Theta \cong \mathbb{R}x, \quad  x = e^{\pi i \theta} i \alpha \beta^*
 $$ 
 In other words, one also has a four-band system in $\mathbb{S}^3_\theta$ consisting of $(\alpha, \beta, \alpha^*, \beta^*)$.

 \subsection{Noncommutative Kane--Mele invariant}
 
 The Kane--Mele invariant \eqref{KMinv} can be understood geometrically by comparing the orientations of
 Pfaffian and determinant line bundles \cite{FM13, KLW15}. 
 We will generalize this geometric picture of the Kane--Mele invariant to the noncommutative 2-torus in this subsection. 
 In addition, this noncommutative Kane--Mele invariant can be interpreted as
 a mod 2 analytical index by the mod 2 spectral flow as before \cite{KLW15, KLW1501}. Finally, by a family index theorem of the Pfaffian line bundle over the
 fixed point algebra, the mod 2 analytical index is identified with the topological index.

  Quillen's construction \cite{Q85} of  the determinant line bundle on Riemann surfaces can be generalized to the noncommutative 2-torus \cite{CM09, FGK14}.
  Let $\mathscr{F}$ be the space of Fredholm operators over a separable infinite dimensional complex Hilbert space $\mathcal{H}$, 
   Quillen constructed the determinant line bundle $Det \rightarrow \mathscr{F} $ such that for any $T \in \mathscr{F}$,
     $$
      Det_T  = \wedge^{max}(ker\,T)^* \otimes \wedge^{max}(ker\,T)
     $$
     
   Assume that the self-adjoint Hamiltonian $H$ is parametrized by the noncommutative torus, that is, $H(a)$ is a family of Hamiltonians  
   for $a \in \mathbb{T}^2_\theta$. Hence the skew-adjoint Fredholm operator 
   $$
   D(a) = \begin{pmatrix}
                                                                                0 & \Theta H(a) \Theta^* \\
                                                                                H(a) & 0
                                                                               \end{pmatrix}
   $$ 
   is also depending on $a  \in \mathbb{T}^2_\theta $.
  
   Therefore, the determinant line bundle over the noncommutative 2-torus is defined as the pullback line bundle
   $\mathcal{L} := D^*Det$ such that
   $$
   \mathcal{L}_{a} = \wedge^{max}(ker H(a))^* \otimes \wedge^{max}(ker \Theta H(a)\Theta^*)
   $$
   $$
   \xymatrix{
\mathcal{L} \ar[d]^\pi \ar[r] & Det \ar[d]^\pi\\
\mathbb{T}^2_\theta \ar[r]^{D} &\mathscr{F}}
   $$

 Consider Fredholm operators acting on a real Hilbert space $\mathscr{F}_\mathbb{R} := \mathscr{F}(\mathcal{H}_\mathbb{R})$,
    which is a classifying space of KO-theory, i.e., $KO(X) \simeq [X, \mathscr{F}_\mathbb{R}]$.
    Similar to the determinant line bundle, we define the Pfaffian line bundle as $Pf \rightarrow \mathscr{F}_\mathbb{R}$ such that
   for any $T \in \mathscr{F}_\mathbb{R}$,      $ Pf_T  = \wedge^{max}ker\,T$.
  If we complexify the underlying real Hilbert space, then we have the relation between the Pfaffian and determinant line bundles,
     $$
     Pf \otimes Pf \cong Det
     $$
     
  Fix a real element $r \in (\mathbb{T}^2_\theta)^\Theta \cong \mathbb{R}x $, 
  the Hamiltonian $H(r)$ is a real Fredholm operator. 
  The Pfaffian line bundle over the fixed point algebra is defined similarly as the pullback line bundle  
   $\mathcal{L}^{1/2} := H^*Pf$ such that  
   $$
   \mathcal{L}^{1/2}_{r} = \wedge^{max}ker H(r)
   $$
   $$
   \xymatrix{
\mathcal{L}^{1/2} \ar[d]^\pi \ar[r] & Pf \ar[d]^\pi\\
(\mathbb{T}^2_\theta)^\Theta \ar[r]^{H} &\mathscr{F}_\mathbb{R}}
   $$
 As a consequence, when the determinant line bundle $\mathcal{L}$ is restricted to the fixed point algebra $(\mathbb{T}^2_\theta)^\Theta
 $, we obtain the desired relation 
 $$
 \mathcal{L}^{1/2} \otimes \mathcal{L}^{1/2} \cong \mathcal{L}|_{(\mathbb{T}^2_\theta)^\Theta}
 $$

    The  determinant line bundle $\mathcal{L}$ has a canonical trivialization $\sigma$ such that $\sigma = 1$ on the open subset of invertible operators in $\mathbb{T}^2_\theta$.
    Similarly there exists a trivialization $ \varrho$ on invertible operators in $(\mathbb{T}^2_\theta)^\Theta$,
    denoted by $(\mathbb{T}^2_\theta)^{\Theta, \times}\cong \mathbb{R}^\times x$, such that $\varrho^2 = 1 $.
    The fixed point algebra $(\mathbb{T}^2_\theta)^\Theta$ is generated by $x$ and $y$ with $x^* = -y$, so 
    the open set $(\mathbb{T}^2_\theta)^{\Theta, \times}\cong \mathbb{R}^\times x = \mathbb{R}^+x \sqcup \mathbb{R}^-x$ is represented
    by these two points $x \in \mathbb{R}^+x $ and $y \in \mathbb{R}^-x $.
     We only care about the orientation of the Pfaffian line bundle $\pi: \mathcal{L}^{1/2} \rightarrow (\mathbb{T}^2_\theta)^\Theta $, 
    so its structure group can be reduced from $\mathbb{R}^\times$ to $\mathbb{Z}_2$. Therefore, the global section $\varrho$ is essentially 
    determined by 
    $$
    \varrho : \{ x, y\} \rightarrow (\mathbb{Z}_2, \times) = \{ 1, -1\}
    $$
    
    \begin{defn}
       The noncommutative Kane--Mele invariant is defined by
    \begin{equation}
        \nu :=  \varrho(x) \varrho(y) \in \{ 1, -1\}
    \end{equation}
    where $x$ and $y$ are generators of the fixed point algebra $(\mathbb{T}^2_\theta)^\Theta$.
    \end{defn}

     If $\varrho(x)\varrho(y) = 1$, then
    the Pfaffian line bundle $\pi: \mathcal{L}^{1/2} \rightarrow (\mathbb{T}^2_\theta)^\Theta $ is orientable, that is, the Kane--Mele invariant
    $\nu$ is trivial. On the other hand, if $\varrho(x)\varrho(y) = -1$, then the Pfaffian line bundle is not orientable, that is,
    $\nu$ is non-trivial.
    
    \begin{prop}
            The Pfaffian line bundle $\pi: \mathcal{L}^{1/2} \rightarrow (\mathbb{T}^2_\theta)^\Theta$ is the $C^*$-version of 
            either the M{\"o}bius band $\pi: Mo \rightarrow S^1$ or the cylinder $S^1 \times \mathbb{R}$.

    \end{prop}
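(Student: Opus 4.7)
The plan is to reduce the proposition to the standard classification of real line bundles over a circle, and then to identify the noncommutative Kane--Mele invariant $\nu = \varrho(x)\varrho(y)$ with the classifying $\mathbb{Z}_2$-valued invariant of that classification.

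First, I would exploit the fact that $(\mathbb{T}^2_\theta)^\Theta$ is generated by a single element $x$ satisfying $xx^* = x^*x = 1$, so that as a complex $C^*$-algebra it is $C^*(x)\cong C(S^1)$; together with the relation $x^* = -y$ and the real structure coming from $\Theta$, this presents it as a real $C^*$-algebra whose underlying real spectrum is, topologically, a circle. Consequently, the Pfaffian line bundle $\mathcal{L}^{1/2}$ is a real line bundle over this ``noncommutative $S^1$''. Since isomorphism classes of real line bundles over $S^1$ are classified by $H^1(S^1,\mathbb{Z}_2)\cong \mathbb{Z}_2$, with trivial class represented by the cylinder $S^1\times\mathbb{R}$ and nontrivial class represented by the M{\"o}bius band $\pi:Mo\to S^1$, this already forces $\mathcal{L}^{1/2}$ to be one of exactly these two bundles.

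Second, I would compute the $\mathbb{Z}_2$ transition cocycle of $\mathcal{L}^{1/2}$ directly from Quillen's construction. Since the structure group reduces from $\mathbb{R}^\times$ to $\mathbb{Z}_2$, it suffices to trivialize $\mathcal{L}^{1/2}$ over the two connected components $\mathbb{R}^+x$ and $\mathbb{R}^-x$ of the invertible subset $(\mathbb{T}^2_\theta)^{\Theta,\times}$ using the section $\varrho$, and to compare. The trivialization value at the component represented by $x$ is $\varrho(x)$, and at the component represented by $y=-x^*$ is $\varrho(y)$. When $\varrho(x)\varrho(y)=+1$ the two local trivializations glue to a global section and the bundle is the cylinder; when $\varrho(x)\varrho(y)=-1$ the \v{C}ech $1$-cocycle is nontrivial and the bundle is the M{\"o}bius band. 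This produces the identification $\nu=+1 \Leftrightarrow \mathcal{L}^{1/2}\cong S^1\times\mathbb{R}$ and $\nu=-1 \Leftrightarrow \mathcal{L}^{1/2}\cong Mo$.

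The main obstacle I anticipate is making the phrase ``$C^*$-version of the M{\"o}bius band / cylinder'' fully precise. One needs to justify that the real Gelfand-type spectrum of $(\mathbb{T}^2_\theta)^\Theta$ behaves like $S^1$ for the purposes of classifying real line bundles, and that the Quillen pullback $\mathcal{L}^{1/2}=H^*Pf$ produces a section module matching either the trivial real module $C((\mathbb{T}^2_\theta)^\Theta,\mathbb{R})$ or the twisted real module isomorphic to $\{f\in C([0,1],\mathbb{R}):f(1)=-f(0)\}$, according to the sign of $\nu$. I would resolve this by constructing these section modules explicitly from $x$, $y$, and the trivialization $\varrho$, and verifying the required real $C^*$-algebra isomorphisms; the nondegeneracy needed for Čech cohomology to apply then follows from the fact that $\mathcal{L}^{1/2}$ is a locally trivial rank-one real bundle by construction.
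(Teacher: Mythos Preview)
Your first step is exactly the paper's proof: the paper observes that $x$ is unitary, so $(\mathbb{T}^2_\theta)^\Theta \cong C(S^1)$, and then invokes the classification of real line bundles over $S^1$ (M{\"o}bius band or cylinder, according to orientability). That is the entire argument the paper gives for this proposition.

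Your second step, identifying which of the two cases occurs via $\nu=\varrho(x)\varrho(y)$, and your discussion of how to make ``$C^*$-version'' precise via explicit section modules, go well beyond what the paper does here. The paper does not attempt to pin down the $\mathbb{Z}_2$ class in this proof; the link between $\nu$ and orientability of $\mathcal{L}^{1/2}$ is stated separately, just after the definition of $\nu$, and without the \v{C}ech-cocycle analysis you outline. So your proposal is correct and subsumes the paper's proof, but for the proposition as stated only your first paragraph is needed.
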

    \begin{proof}
       Since the generator $x$ of the fixed point algebra $(\mathbb{T}^2_\theta)^\Theta$ is a unitary operator, we have 
       $$
         (\mathbb{T}^2_\theta)^\Theta \cong C(S^1)
       $$
       which is not true for the fixed point algebra $(\mathbb{S}^3_\theta)^\Theta$. The Pfaffian line bundle
       $\pi: \mathcal{L}^{1/2} \rightarrow (\mathbb{T}^2_\theta)^\Theta$ is then the $C^*$-version of a real line bundle over the circle,
       which is either the  M{\"o}bius band or the cylinder according to its orientability.
    \end{proof}

   \begin{thm}
      The noncommutative Kane--Mele invariant and the noncommutative topological index are equivalent.
   \end{thm}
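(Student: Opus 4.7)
The plan is to use the determinant/Pfaffian line bundle construction from the family index discussion as the bridge between the two invariants. Both sides of the claimed equality are $\mathbb{Z}_2$--valued, so it suffices to show that one is trivial precisely when the other is. The strategy is to reduce the topological index to a first Chern number of the pullback determinant line bundle $\mathcal{L} = D^*Det$ over $\mathbb{T}^2_\theta$, and then to show that this class modulo $2$ detects exactly the orientability obstruction of the Pfaffian line bundle $\mathcal{L}^{1/2}$ on the fixed point subalgebra.

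First I would identify $ind(p) = \langle Ch^0(D,\gamma), Ch_0(p)\rangle$ with the Fredholm index of $pDp|_{p\mathcal H}$, which by Quillen's construction (applied to the family $a \mapsto D(a)$) is represented by the first Chern class of the index bundle, and hence by $c_1(\mathcal{L})$. Because the Bloch bundle underlying the KQ--cycle is Quaternionic, $c_1(\mathcal{L})$ is $2$--torsion, so the topological index lives naturally in $\mathbb Z_2$ and agrees with the mod--$2$ reduction of this Chern class. Next I would restrict $\mathcal{L}$ to the fixed point subalgebra $(\mathbb{T}^2_\theta)^\Theta$ and use the relation $\mathcal{L}^{1/2}\otimes\mathcal{L}^{1/2}\cong \mathcal{L}|_{(\mathbb{T}^2_\theta)^\Theta}$ together with the noncommutative analogue of the identity $c_1(Pf(D)) = Pfaff([Ind(D)])$ established in the family index subsection. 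This exhibits the mod--$2$ Chern class of $\mathcal{L}$ as the first Stiefel--Whitney class $w_1(\mathcal{L}^{1/2})$, i.e., as the orientability obstruction of the real line bundle $\mathcal{L}^{1/2}\to (\mathbb{T}^2_\theta)^\Theta$.

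The final step is to identify $w_1(\mathcal{L}^{1/2})$ with the Kane--Mele invariant $\nu = \varrho(x)\varrho(y)$. Since $(\mathbb{T}^2_\theta)^\Theta\cong C(S^1)$ and the open set of invertible real elements has exactly two components represented by $x$ and $y=-x^*$, the line bundle $\mathcal{L}^{1/2}$ is classified by its restriction to these two points, and the canonical trivialization $\varrho$ on invertible operators compares orientations across them. Thus $\mathcal{L}^{1/2}$ is the noncommutative Möbius band precisely when $\varrho(x)\varrho(y) = -1$, matching the non--triviality of the topological index. The main obstacle will be step two: making the passage from the $C^*$--algebraic Connes--Chern pairing to the geometric first Chern number of $\mathcal{L}$ rigorous, and justifying that restriction to the fixed point subalgebra indeed implements the push--forward appearing in the family index formula $\pi_{!}([1]) = [Ind(D)]$ in this noncommutative setting, where the analogue of the cotangent bundle structure must be treated with care.
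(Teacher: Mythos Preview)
Your proposal is broadly on the right track and uses the same infrastructure as the paper --- the Quillen determinant/Pfaffian construction pulled back along $D$ (resp.\ $H$), the square-root relation $\mathcal{L}^{1/2}\otimes\mathcal{L}^{1/2}\cong\mathcal{L}|_{(\mathbb{T}^2_\theta)^\Theta}$, and the identification $(\mathbb{T}^2_\theta)^\Theta\cong C(S^1)$ --- but the intermediate logic differs. You pass through characteristic classes: topological index $\leadsto c_1(\mathcal{L})\bmod 2 \leadsto w_1(\mathcal{L}^{1/2}) \leadsto \nu$. The paper instead passes through the \emph{analytic} index: it first reinterprets $\nu=\varrho(x)\varrho(y)$ as a mod~$2$ spectral flow along the channels $x,y$ (hence as $\dim\ker D\bmod 2$), then invokes the Gysin map on the Pfaffian bundle to place $[Ind(D)]$ in $KO_1[(\mathbb{T}^2_\theta)^\Theta]=KO^{-2}(pt)=\mathbb{Z}_2$, and finally observes that the Connes--Chern index pairing from \S\ref{TopInd} lands in the same $KO^{-2}(pt)$, giving the equality as a bulk-boundary correspondence.

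What this buys: the paper's route avoids exactly the obstacle you flag --- it never needs to identify the cyclic pairing with a geometric $c_1$ of a pullback line bundle over a noncommutative base, because the equality of analytic and topological index is taken as the content of the (noncommutative) family index theorem, with both sides already living in the same $KO$-group. Your route, by contrast, makes the cohomological content more explicit (the orientability obstruction is literally $w_1$), but at the cost of having to justify the identity $ind(p)=c_1(\mathcal{L})\bmod 2$ in the $C^*$-setting, which is precisely where the paper's argument is also at its most heuristic. Neither argument is fully rigorous at that step; the paper simply packages the difficulty differently by staying in $K$-theory rather than descending to cohomology.
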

   \begin{proof}
     First the index bundle over the fixed point algebra,
     $$
     \pi: Ind(D) \rightarrow  (\mathbb{T}^2_\theta)^\Theta
     $$
     is defined by the Fredholm index of the skew-adjoint Fredholm operator $D$, i.e.,  the analytic index, 
     $$
       index(D) = \dim \ker D \,\,\text{(mod 2)}
     $$
     The fixed point algebra $(\mathbb{T}^2_\theta)^\Theta$ has two generators $x$ and $y$, whose geometry can be understood as a spectral flow
     between the generators of $\mathbb{T}^2_\theta$. For example, if the global section assigns $1$ (or trivial) 
     to  the generator $x = e^{\pi i \theta} i uv^*$, i.e., $\varrho(x) = 1$, then the local orientability of $\pi: \mathcal{L}^{1/2} \rightarrow (\mathbb{T}^2_\theta)^\Theta$
     around $x$ does not change, that is, there is no effective spectral flow at $x$. 
     Instead, if $\varrho(x) = -1$, then there exists a non-trivial spectral flow at $x$.
     In other words, the noncommutative Kane--Mele invariant $\nu =  \varrho(x) \varrho(y)$ is interpreted as
     a mod 2 analytic index using spectral flow. 
     
     Applying the Gysin map to the Pfaffian line bundle $\pi: \mathcal{L}^{1/2} \rightarrow (\mathbb{T}^2_\theta)^\Theta $,
      the index bundle gives an element in the real K-group of the fixed point algebra, 
        \begin{equation}
       [Ind(D)]  \in KO_1[(\mathbb{T}^2_\theta)^\Theta] = KO^{-2}(pt) 
    \end{equation}
     Note that the Pfaffian line bundle $\pi: \mathcal{L}^{1/2} \rightarrow (\mathbb{T}^2_\theta)^\Theta$ is a rank 1 real line bundle,
      so we know the topological index is in $ KO_1[(\mathbb{T}^2_\theta)^\Theta]  = \mathbb{Z}_2$. 
    This is a generalization of the family index theorem over the set of fixed points,  which computes the analytic index by the topological index.
    
    From the discussion about the topological index, we know the index pairing between the KQ-cycle and the Hilbert bundle also falls into $KO^{-2}(pt)$,
    so does the noncommutative topological index in \S \ref{TopInd}.
    Thus we establish the equivalence
    between the noncommutative Kane--Mele invariant over $(\mathbb{T}^2_\theta)^\Theta$ and the noncommutative topological index of the bulk $\mathbb{T}^2_\theta$.
  which is called the bulk-boundary correspondence for $\mathbb{T}^2_\theta$ generalizing the classical case \cite{KLW1501}.
  \end{proof}
 
 Similarly, it is easy to define the noncommutative Kane--Mele invariant on the noncommutative 3-sphere. The bulk-boundary correspondence for 
 3d noncommutative manifolds will be worked out in a future paper.

\nocite{*}
\bibliographystyle{plain}
\bibliography{NCZ2}

\end{document}